\newtheorem{theorem}{Theorem}
\newtheorem{remark}{Remark}[theorem]
\newtheorem{lemma}{Lemma}
\begin{document}

\title{Superconvergence of Topological Entropy in the Symbolic Dynamics of Substitution Sequences}

\author{Leon Zaporski}
 \email{leon.zaporski@maths.ox.ac.uk}
\affiliation{Mathematical Institute, University of Oxford, Andrew Wiles Building, Radcliffe Observatory Quarter, Woodstock Road, Oxford, OX2 6GG, UK}
\author{Felix Flicker}
 \email{flicker@physics.org}
\affiliation{Rudolph Peierls Centre for Theoretical Physics, University of Oxford, Department of Physics, Clarendon Laboratory, Parks Road, Oxford, OX1 3PU, UK}

\date{\today}

\begin{abstract}
We consider infinite sequences of superstable orbits (cascades) generated by systematic substitutions of letters in the symbolic dynamics of one-dimensional nonlinear systems in the logistic map universality class. We identify the conditions under which the topological entropy of successive words converges as a double exponential onto the accumulation point, and find the convergence rates analytically for selected cascades. Numerical tests of the convergence of the control parameter reveal a tendency to quantitatively universal double-exponential convergence. Taking a specific physical example, we consider cascades of stable orbits described by symbolic sequences with the symmetries of quasilattices. We show that all quasilattices can be realised as stable trajectories in nonlinear dynamical systems, extending previous results in which two were identified.
\end{abstract}

\maketitle

%
\section{Introduction}
\label{sec:intro}
%

\begin{figure*}[t!]
\begin{minipage}[c]{\textwidth}
\includegraphics[width=\linewidth]{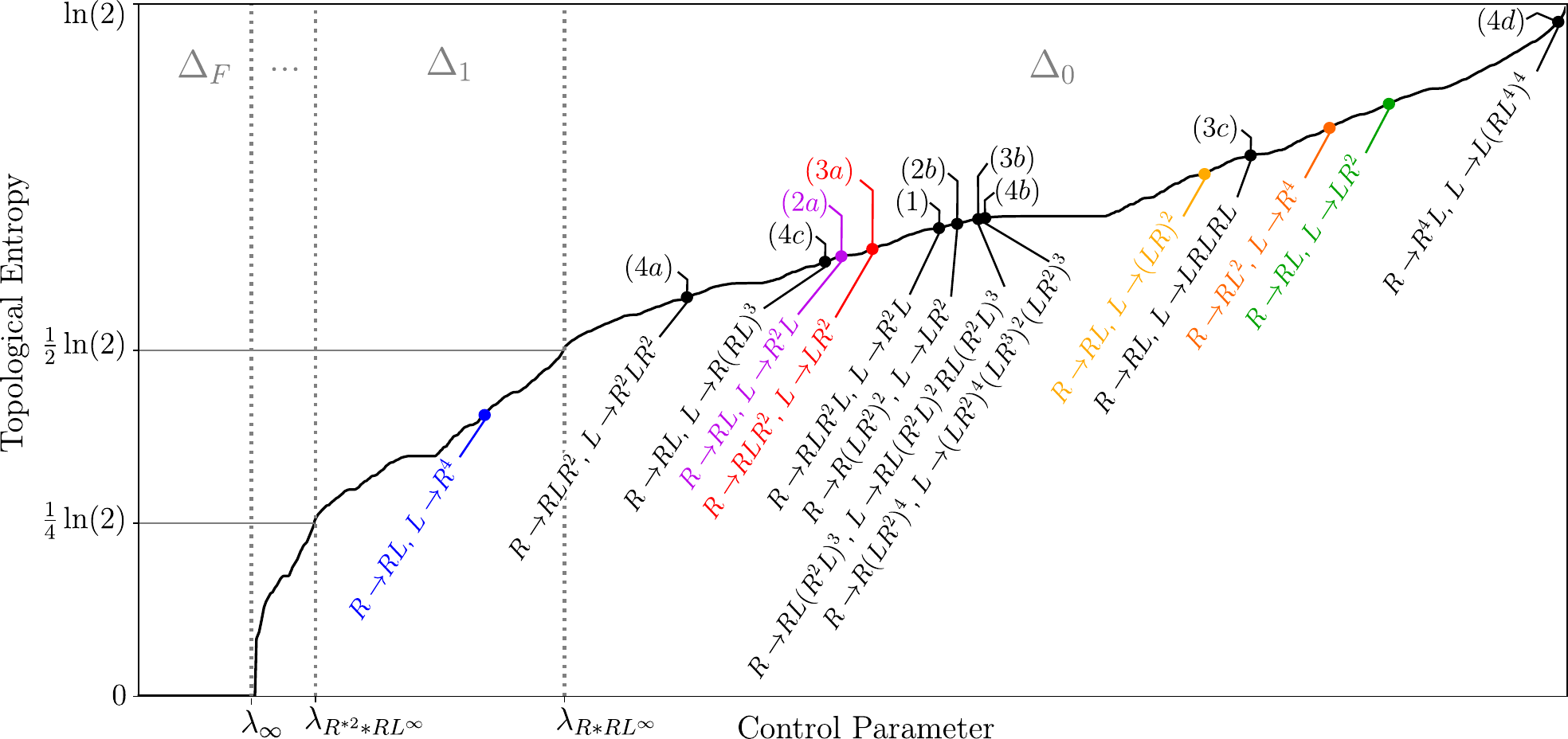}
\end{minipage}\hfill 
\caption{The topological entropy as a function of control parameter $\lambda$ in the Logistic map (Eq.~\eqref{eq:logistic}). The characteristic multifractal shape is known as a Devil's Staircase~\cite{PhysRevE.51.1983}. The plot was generated by finding the smallest positive zeros of the truncated kneading determinants numerically for itineraries of $f_\lambda(x)=1-\lambda x^2$ corresponding to $\lambda \in [1.35,2.00]$, uniformly sampled (see Section~\ref{sec:topological_entropy}). The point $\lambda_\infty$ corresponds to the infinite word $R^{*\infty}$ (notation defined in Section \ref{subsec:Fibonacci}). The point $\lambda_{R*RL^\infty}$ marks the lower bound of the interval $\Delta_0$, and the upper bound of the interval $\Delta_1$. The point $\lambda_{R^{*2}*RL^\infty}$ marks the lower bound of the interval $\Delta_1$ and the upper bound of the interval $\Delta_2$. The accumulation points of several substitution sequences are indicated, with the substitutions listed below the curve. Where colours are used, they are consistent between figures. In the cases that accumulation points correspond to generalised time quasilattices, the Boyle-Steinhardt class is indicated above the curve (see Section~\ref{sec:gTCs})~\cite{BoyleSteinhardt16}.
}\label{fig:Staircase}
\end{figure*} 

Chaos theory, governing systems featuring a lack of predictability under deterministic dynamical evolution, has found applications as varied as climate science, communications, economics, and fluid mechanics, to name a few~\cite{climate,cuomo,econ,knobloch_proctor_1981,PhysRevA.36.2862,Li2016}. The key to this wide applicability lies in \emph{universality}: a quantitative and qualitative similarity across seemingly disparate chaotic models~\cite{strogatz:2000,ChaosBook,AlligoodEA,KNOBLOCH1981439,Feigenbaum80}. 

One of the clearest demonstrations of universality is provided by \emph{symbolic dynamics}, the study of a system's dynamical behaviour when coarse-grained into discrete regions labeled by different symbols~\cite{ColletEckmann80,deBruijn81,Grassberger88,bailinhao}. Many properties of chaotic systems can be understood entirely in terms of the sequences of symbols (regions visited) in this manner~\cite{Isola1990}. One example is the `universal sequence' in which periodic windows develop when chaos is reached via a \emph{period doubling cascade}: an infinite sequence of period-doubling bifurcations~\cite{METROPOLIS197325,Sharkovskii64,strogatz:2000,ChaosBook,AlligoodEA,KNOBLOCH1981439,Feigenbaum80}. At each bifurcation, an initially stable periodic trajectory (orbit) becomes unstable, while a new trajectory of twice the period stabilises. Many important results concerning the period doubling cascade were established using the tools of symbolic dynamics~\cite{ColletEckmann80,deBruijn81,Grassberger88,bailinhao}. Period doubling continues to be of importance to cutting edge research: recent experiments established the existence of (discrete) \emph{time crystals}, which spontaneously break the symmetry of a periodic driving by returning a robust period-doubled response, made rigid to perturbations and finite temperature by the local interactions of many degrees of freedom~\cite{YaoEA17,ElseEA16,KhemaniEA16,ChoiEA17,ZhangEA17}. 

The present work is motivated in part by recent results establishing that periodically-driven nonlinear systems can feature not just period-doubled responses, but robust responses with the symmetries of one-dimensional \emph{quasilattices}~\cite{Fli18}. Well known in the context of crystallography, quasilattices are aperiodic long-range-ordered tilings, comprised of two or more unit cells, which possess a discrete scale invariance but which lack the discrete translational invariance of periodic crystal lattices~\cite{PhysRevB.34.617}. Mathematically they can be generated by repeatedly substituting tiles in a prescribed manner~\cite{Senechal,Janot}. In the context of dynamical systems, unit cells are replaced by trajectories with periods of fixed duration set, for example, by a periodic driving; the system spontaneously breaks the discrete time-translation symmetry of this driving by returning an aperiodic sequence of periods of two different durations. The sequence of cells can be generated mathematically by repeated application of substitutions to the symbolic dynamics governing the system~\cite{Fli18}. 

In the present work we consider a more general class of substitution rules applied to the symbolic dynamics of nonlinear systems. Given a periodic orbit described by an itinerary of coarse-grained regions visited in a dynamical system, an infinite sequence of new itineraries is generated by repeatedly substituting the symbols corresponding to the regions. Quasilattice substitution rules fall within the set we consider, and, by considering a simple generalisation of the basic quasilattice concept, we find that we are able to identify aperiodic orbits corresponding to all physically relevant quasilattices, extending previous results identifying two cases. Generalizing further we consider a set of substitutions additionally covering, for example, the period-doubling cascade~\cite{bailinhao}. The specific question we address is whether the complexity of the sequences in these generalised cascades shows any universal behaviour analogous to that shown in the period-doubling case.  

Using the notion of \emph{topological entropy} to quantify the complexity of symbolic sequences, we are able to make a number of precise statements about the development of complexity upon flowing down the cascades. Whereas the topological entropy is zero for all sequences in the period-doubling cascade, for other substitution sequences it increases monotonically~\cite{DH85,MThu,dMvS93,Dou95,Tsu00}. We find that the topological entropy of the wide class of substitution sequences we consider converges as a double exponential onto its accumulation point. We find the convergence rates analytically for some simple cascades, and outline the procedure for deriving the result for arbitrary cascades, given certain general assumptions. By numerically investigating the corresponding control parameters for specific unimodal maps, we further identify a second universal double-exponential convergence. Together, these results suggest further universal properties of the behaviour of these substitution sequences. In particular, the intervals within which admissible words have lengths greater than $n$ shrink geometrically as $n$ increases, in both the topological entropy (with geometric ratio $2$) and control parameter (with a system-dependent geometric ratio, previously identified in Ref.~\onlinecite{PhysRevLett.47.975}). 

This paper proceeds as follows. In Section~\ref{subsec:symbolicDynamics} we provide some relevant background to the field of symbolic dynamics, and define the conventions used in the paper. In Section~\ref{subsec:ressymdyn} we define the general class of substitutions we consider, and in Section~\ref{subsec:summarya} we collect a summary of the mathematical results presented throughout the remaining paper. In Section~\ref{sec:gTCs} we motivate the set of substitutions we consider by focussing on a specific subset of physical relevance, featuring the symmetries of quasilattices. We demonstrate that, under a simple generalisation, all one-dimensional quasilattices can appear as stable orbits in nonlinear dynamical systems. In Section~\ref{sec:topological_entropy} we return to the wider class of substitutions, and present results on the convergence of the topological entropy. In Section~\ref{sec:superconvergence} we demonstrate the superconvergence of the control parameter onto its accumulation point, in the same class of substitution sequences. We provide concluding remarks in Section~\ref{sec:Conclusions}.

%
\section{Background}
\label{sec:bacc}
%
 
%
\subsection{Symbolic Dynamics}
\label{subsec:symbolicDynamics}

One of the simplest systems to exhibit chaos is the one-parameter discrete-time \textit{logistic map}~\cite{mayr}:
\begin{equation}\label{eq:logistic}
x_{n+1}=\lambda x_n(1-x_n)
\end{equation}
defined on the interval $x_n\in\left[0,1\right]$. The dynamics become chaotic above a critical value $\lambda_0=3.56996..$. Chaos is reached through a cascade of supercritical pitchfork bifurcations at $\lambda<\lambda_0$, each of which doubles the period of the stable orbit~\cite{GiglioEA81}. The values of the control parameter $\lambda$ converge geometrically (with a ratio given by the now-famous Feigenbaum Constant $\delta=4.669\ldots$) to the accumulation point~\cite{Feig78}. This mathematical prediction was verified in various physical systems, including Rayleigh-B\'{e}nard convection and the Belousov Zhabotinsky reaction~\cite{RevModPhys.57.617,yahata,BZ}. 

The chaotic regimes of the continuous set of functions $f_\lambda$ are interspersed with \textit{periodic windows} of a finite range of $\lambda$, for which the system's behaviour converges onto a stable periodic orbit. Periodic windows were found to appear in a universal order for all unimodal maps~\cite{METROPOLIS197325,HaoEA83}. In order to establish the symbolic dynamics of map $f_\lambda$ we can assign the labels $L$, $R$ or $C$ to consecutive iterates of $f_\lambda$ if they are left, right, or central on the map (centred at the maximum, $x_c$), respectively~\cite{ColletEckmann80,Grassberger88}. In that sense, sequences of letters (\emph{words}) describe the itinerary (or kneading sequence) of points visited, coarse-grained to $L$, $R$, or $C$. Various tools have been developed to identify the words admissible as stable periodic orbits. In the following we focus on the \textit{generalised composition rules}, which systematically generate admissible words by a substitution process~\cite{deBruijn81}.

To take an example, the itineraries of positions visited by successive orbits in the period doubling cascade can be described by the substitutions
\begin{align}
R &\to RL\nonumber\\
L &\to RR
\label{eq:pd}
\end{align}
applied to the initial stable orbit $R$. At each stable orbit, the system exhibits periodic behaviour, generating an itinerary given by $\mathbf{K}=\bar{\mathbf{K}}^\infty$, where $\bar{\mathbf{K}}$ is a finite word, repeated infinitely to give the itinerary $\mathbf{K}$.

A word is defined to have an even \emph{parity} if it contains an even number of letters $R$, and an odd parity otherwise. A word $\bar{\mathbf{K}}$ with its last letter substituted with $C$, which we denote $\bar{\mathbf{K}}\rvert_C$, corresponds to an orbit containing the maximum at $x_c$, which ensures \emph{superstability} -- a faster-than-exponential convergence onto a (super)stable orbit~\cite{strogatz:2000}. The universal order of periodic windows coincides with the \emph{parity-lexicographic order} of words, defined through the relation `$\prec$' in the following way:
\begin{equation}
L \prec C \prec R
\end{equation}
and for two admissible words:
\begin{align}
&\bar{\mathbf{K}} =\bar{\mathbf{K}}^*\bar{\mathbf{L}}\nonumber\\ 
&\bar{\mathbf{K}}^\prime =\bar{\mathbf{K}}^*\bar{\mathbf{L}}^\prime
\end{align}
with $\bar{\mathbf{K}}^*$ their longest leftmost common substring (which may be the blank word, which has even parity). Denoting $l$ and $l^\prime$ the initial letters of $\bar{\mathbf{L}}$ and $\bar{\mathbf{L}}^\prime$:
\begin{align}
\bar{\mathbf{K}} \prec \bar{\mathbf{K}}^\prime \quad &\text{if } \bar{\mathbf{K}}^* \text{ has even parity, and } l \prec l^\prime \nonumber\\
\bar{\mathbf{K}}^\prime \prec \bar{\mathbf{K}} \quad &\text{if } \bar{\mathbf{K}}^* \text{ has odd parity, and } l \prec l^\prime.
\end{align}
A word is said to be \emph{maximal} if it is greater than or equal to all of its rightmost substrings (where equality holds only for the rightmost substring being the word itself)~\cite{METROPOLIS197325}. All maximal words are admissible~\cite{METROPOLIS197325,bailinhao}. 

Each admissible itinerary has a well-defined \textit{Topological Entropy}, an important measure of the dynamics' complexity~\cite{collet1983,muradkaki}. Consider the \textit{growth number} $s(n,\varepsilon)$, equal to the maximal possible number of distinguishable orbits after $n$ time steps, where two points belong to distinguishable orbits if they are separated by at least the distance $\varepsilon$. The topological entropy $h$ is then defined in the following manner~\cite{10.2307/1995565}:
\begin{equation}
h\equiv\lim_{\varepsilon \to 0} \lim_{n \to \infty} \frac{\text{ln}[s(n,\varepsilon)]}{n}.
\end{equation}
The topological entropy for discrete-time dynamical systems governed by unimodal maps is given by~\cite{Misiurewicz1980}:
\begin{equation}
h=\text{ln}\big(\lim_{i\to\infty}l^{1\slash i}_i\big)
\end{equation}
where $l_i$ is the number of laps (monotone intervals) of $f_\lambda^i$, \emph{i.e.} the $i^{\textrm{th}}$ iterate of $f_\lambda$. The higher the topological entropy, the more laps $\lim_{i\to\infty} f_\lambda^i$ has. In that sense, if we interpret $\lim_{i\to\infty} f_\lambda^i$ as a distribution of different initial values of $x$ acted on with $f_\lambda$ iteratively, infinitely many times, such a distribution is more complex for $\lambda$ corresponding to a higher topological entropy. 

The topological entropy depends only on the kneading sequence, not the form of the map $f_\lambda$~\cite{MThu}. Milnor and Thurston developed the concept of the \textit{kneading determinant}, providing a systematic way of calculating the topological entropy based only on the kneading sequence~\cite{MThu}. 

Topological entropy is preserved on intervals of $\lambda$ corresponding to period doubling cascades, as well as \textit{period} $K$\emph{-tupling} cascades -- sequences of orbits with periods increasing as $K^n$ ($K,n \in \mathbb{N}$), with itineraries generated by an alternative composition rule, the \textit{Derrida-Gervois-Pomeau star product} ($DGP*$)~\cite{PENG199443,Derrida1978}. All such cascades feature a geometric convergence of their topological entropies, characterised with universal parameters, which can be seen as a consequence of the associativity of the composition operators as well as the algebraic property of $DGP*$ leading to the conservation of topological entropy~\cite{0253-6102-3-3-283,XU20141505}. 

Plots of the evolution of the topological entropy with the control parameter take the form of a Devil's staircase, reproduced in Fig.~\ref{fig:Staircase}~\cite{PENG199443}. This features a multifractal structure (in the sense that the fractal dimensions of local portions of the graph are position dependent, and are elements of a continuous spectrum) and is non-decreasing, forming a piecewise-constant function, that is H\"older continuous~\cite{PhysRevE.51.1983,Tsu00,Isola1990,Bruin2009MonotonicityOE}. The interval marked $\Delta_F$ in the figure, $\lambda \in [0,\lambda_\infty)$ (of which only a small part is shown), features zero topological entropy. The intervals marked $\Delta_n$ ($n$ integer) exhibit geometric scaling by $0.5$ in the vertical direction and the Feigenbaum constant $\delta(R)=4.669\dots$ in the horizontal direction~\cite{PhysRevE.51.1983,collet1983}.

\subsection{Word Operations}
\label{subsec:ressymdyn}

Considering words written in an alphabet $\{L,R,C\}$, we define the following operations:
\begin{itemize}
\item $\bar{\mathbf{A}} \bar{\mathbf{B}}$ indicates the concatenation of words $\bar{\mathbf{A}}$ and $\bar{\mathbf{B}}$
\item $\lvert\bar{\mathbf{A}}\rvert$ returns the number of letters in $\bar{\mathbf{A}}$
\item $\lvert\bar{\mathbf{A}}\rvert_{R,L}$ returns the number of letters $R,L$ in $\bar{\mathbf{A}}$
\item $\bar{\mathbf{A}}\rvert_C$ substitutes the final letter of $\bar{\mathbf{A}}$ with the letter $C$.
\end{itemize}
Inverse words are defined as follows:
\begin{align}
\bar{\mathbf{A}}^{-1}\left(\bar{\mathbf{A}}\bar{\mathbf{B}}\right)=\bar{\mathbf{B}}\nonumber\\
\left(\bar{\mathbf{A}}\bar{\mathbf{B}}\right)\bar{\mathbf{B}}^{-1}=\bar{\mathbf{A}}.
\end{align}

We will be interested in substitution sequences of the form:
\begin{equation}\label{eq:subs_defs}
R \to  \bar{\mathbf{R}} \quad \text{and} \quad L \to  \bar{\mathbf{L}}.
\end{equation}
Assume that $\bar{\mathbf{R}}$ contains both letters $R$ and $L$ and starts with the letter $R$. We define an operator $\hat{S}(\bullet)$ that acts on word $ \bar{\mathbf{W}}$ by applying substitution rules to each of its letters. The operator $\hat{S}(\bullet)$ is distributive under concatenation. We prove an important theorem.

\begin{theorem}\label{secorder}
Substitution rules generating a cascade with initial word $\bar{\mathbf{W}}_{1}=R$ and $\bar{\mathbf{W}}_{2}=\bar{\mathbf{R}}$ can be restated as a second order linear recursive relation $\bar{\mathbf{W}}_{n+2}=g(\bar{\mathbf{W}}_{n},\bar{\mathbf{W}}_{n+1})$ under concatenation if $\bar{\mathbf{W}}_{3}=g(\bar{\mathbf{W}}_{1},\bar{\mathbf{W}}_{2})$. 
\end{theorem}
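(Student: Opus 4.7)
The plan is to proceed by a one-variable induction on $n$, leveraging the fact that the substitution operator $\hat{S}$ is distributive under concatenation, as emphasised just above the statement of the theorem. First I would unpack the meaning of the cascade: each word is defined by $\bar{\mathbf{W}}_{n+1} = \hat{S}(\bar{\mathbf{W}}_n)$, with $\bar{\mathbf{W}}_1 = R$ and hence $\bar{\mathbf{W}}_2 = \hat{S}(R) = \bar{\mathbf{R}}$ by the substitution rule. I would also make explicit what \emph{linear under concatenation} means for $g$: since the only operation available is concatenation, $g(X,Y)$ must be a fixed word expression $Z_1 Z_2 \cdots Z_k$ in which each factor $Z_j$ is either $X$ or $Y$. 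Equivalently, $g$ is a word in the free monoid on two symbols, one playing the role of $X$ and the other of $Y$.

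The base case $n=1$ is supplied directly by the hypothesis $\bar{\mathbf{W}}_3 = g(\bar{\mathbf{W}}_1,\bar{\mathbf{W}}_2)$. For the induction step, assume $\bar{\mathbf{W}}_{n+2} = g(\bar{\mathbf{W}}_n,\bar{\mathbf{W}}_{n+1})$ and apply $\hat{S}$ to both sides. The left-hand side becomes $\bar{\mathbf{W}}_{n+3}$ by the definition of the cascade. For the right-hand side, because $g(X,Y)$ is purely a concatenation of copies of its arguments, distributivity yields
\begin{equation}
\hat{S}\bigl(g(X,Y)\bigr) = g\bigl(\hat{S}(X),\hat{S}(Y)\bigr),
\end{equation}
so the right-hand side becomes $g(\bar{\mathbf{W}}_{n+1},\bar{\mathbf{W}}_{n+2})$. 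This closes the induction and gives $\bar{\mathbf{W}}_{n+2} = g(\bar{\mathbf{W}}_n,\bar{\mathbf{W}}_{n+1})$ for every $n\geq 1$.

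The only conceptual point that requires care, and which I would flag as the pivotal step, is recognising that a linear $g$ is precisely a morphism-compatible expression in the free monoid, so that the letterwise substitution $\hat{S}$ commutes with $g$. Once this identification is made the induction is essentially mechanical. I anticipate no further technical obstacles: the standing assumptions that $\bar{\mathbf{R}}$ starts with $R$ and contains both letters $R$ and $L$ are needed elsewhere in the paper (to ensure that the cascade is nontrivial and that admissibility is preserved) but play no role in the argument above.
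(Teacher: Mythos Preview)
Your proposal is correct and follows essentially the same approach as the paper: induction on $n$, with the inductive step consisting of applying $\hat{S}$ to both sides of the recursion and invoking distributivity of $\hat{S}$ under concatenation to obtain $\hat{S}\bigl(g(X,Y)\bigr)=g\bigl(\hat{S}(X),\hat{S}(Y)\bigr)$. Your explicit identification of $g$ as a word in the free monoid on two generators is a helpful clarification that the paper leaves implicit, but the underlying argument is the same.
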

\begin{proof}
Such a relation can be written for the first three words. Assuming
\begin{equation*}
\bar{\mathbf{W}}_{n}=g(\bar{\mathbf{W}}_{n-2},\bar{\mathbf{W}}_{n-1})
\end{equation*} 
we have:
\begin{equation*}
\bar{\mathbf{W}}_{n+1}=\hat{S}(\bar{\mathbf{W}}_{n})=\hat{S}g(\bar{\mathbf{W}}_{n-2},\bar{\mathbf{W}}_{n-1}).
\end{equation*}
Since $\hat{S}(\bullet)$ is distributive, this gives
\begin{align}
\hat{S}g(\bar{\mathbf{W}}_{n-2},\bar{\mathbf{W}}_{n-1})=g(\hat{S}\bar{\mathbf{W}}_{n-2},\hat{S}\bar{\mathbf{W}}_{n-1})=g(\bar{\mathbf{W}}_{n-1},\bar{\mathbf{W}}_{n})
\end{align}
and therefore:
\begin{equation*}
\bar{\mathbf{W}}_{n+1}=g(\bar{\mathbf{W}}_{n-1},\bar{\mathbf{W}}_{n}).
\end{equation*} 
By induction,
\begin{equation*}
\bar{\mathbf{W}}_{n+2}=g(\bar{\mathbf{W}}_{n},\bar{\mathbf{W}}_{n+1})
\end{equation*}
holds for all natural numbers $n$.
\end{proof}

Not all substitution rules satisfy the conditions of Theorem~\eqref{secorder}. Take for example the substitution rules $R\to RLL$ and $L\to RLR$ (the period-tripling cascade). The cascade develops as follows:
\begin{equation}
R \to RLL \to RLLRLRRLR \to \ldots
\end{equation}
The third word cannot be written in terms of the first and second, owing to the isolated letters $L$. 

The substitution can be characterised by a $2\times 2$ \emph{growth matrix}
\begin{align}
A=\left(\begin{array}{cc}
a & b\\
c & d
\end{array}\right)
\end{align}
with non-negative integer entries, which quantifies the growth in the numbers of each letter type:
\begin{align}
\left(\begin{array}{c}
\left|\bar{\mathbf{W}}_{n}\right|_R\\
\left|\bar{\mathbf{W}}_{n}\right|_L
\end{array}\right)\rightarrow\left(\begin{array}{cc}
a & b\\
c & d
\end{array}\right)\left(\begin{array}{c}
\left|\bar{\mathbf{W}}_n\right|_R\\
\left|\bar{\mathbf{W}}_n\right|_L
\end{array}\right)=\left(\begin{array}{c}
\left|\bar{\mathbf{W}}_{n+1}\right|_R\\
\left|\bar{\mathbf{W}}_{n+1}\right|_L
\end{array}\right)
\end{align}
where repeated applications of the matrix correspond to multiple iterations of the substitutions. The class of substitutions we consider can then be written as
\begin{align}
\bar{\mathbf{W}}_{n}&=\bar{\mathbf{W}}_{n-1}\mathcal{P}\left(\bar{\mathbf{W}}_{n-1}^{\text{tr}\left(A\right)-1}\bar{\mathbf{W}}_{n-2}^{-\det\left(A\right)}\right)
\end{align}
for $n>2$, with $\bar{\mathbf{W}}_{1}=R$, and $\bar{\mathbf{W}}_{2}$ a specified word. The symbol $\mathcal{P}$ indicates an unspecified permutation. The characteristic equation of the growth matrix $A$ is
\begin{align}
\lambda^{2}&=\text{tr}\left(A\right)\lambda-\det\left(A\right).
\label{eq:W_characteristic}
\end{align}
The solutions (eigenvalues of $A$) must be real, and either integer or quadratic irrational. The second case is returned to in detail in Section~\ref{sec:gTCs} when we consider the special case of quasilattices. The ratio of the components of the eigenvector associated to the largest eigenvalue gives the relative frequencies of the two cell types~\cite{BoyleSteinhardt16}.
Eq.~\eqref{eq:W_characteristic} can be seen as the $n\rightarrow\infty$ limit of the defining equation of some integer sequence $W_n$ given by
\begin{align}
W_{n}&=\text{tr}\left(A\right)W_{n-1}-\det\left(A\right)W_{n-2}
\end{align}
for $n>2$, $W_1=\left|\bar{\mathbf{W}}_1\right|=1$, and $W_2=\left|\bar{\mathbf{W}}_2\right|$. The ratio $W_n/W_{n-1}$ gives the best possible rational approximation, for denominators not larger than $W_{n-1}$, to the largest eigenvalue of the growth matrix, \emph{i.e.} the larger of the solutions to Eq.~\eqref{eq:W_characteristic}.

Taking as an illustrative example the case of the period-doubling substitutions of Eq.~\eqref{eq:pd}, the words $\bar{\mathbf{W}}_n$ can be generated by
\begin{align}
\bar{\mathbf{W}}_n=\bar{\mathbf{W}}_{n-1}\bar{\mathbf{W}}^2_{n-2}
\end{align}
for $n>2$, with $\bar{\mathbf{W}}_1=R$ and $\bar{\mathbf{W}}_2=RL$. The first few cases are:
\begin{align}
R\rightarrow RL\rightarrow RLR^2\rightarrow RLR^3LRL\rightarrow RLR^3LRLRLR^3LR^2\rightarrow\ldots
\end{align}
The growth matrix can be concisely expressed if letters are taken to combine under addition rather than concatenation; forgiving this (hopefully intuitive) abuse of notation it is given by
\begin{align}
\left(\begin{array}{c}
R\\
L
\end{array}\right)\rightarrow\left(\begin{array}{cc}
1 & 1\\
2 & 0
\end{array}\right)\left(\begin{array}{c}
R\\
L
\end{array}\right)=\left(\begin{array}{c}
R+L\\
R+R
\end{array}\right)
\end{align}
with characteristic equation
\begin{align}
\lambda^2=\lambda+2
\end{align}
which can be seen as the $n\rightarrow\infty$ limit of the integer sequence $W_n$ given by
\begin{align}
W_n=W_{n-1}+2W_{n-2}
\end{align}
for $n>2$, with $W_1=\left|\bar{\mathbf{W}}_1\right|=\left|R\right|=1$ and $W_2=\left|\bar{\mathbf{W}}_2\right|=\left|RL\right|=2$. Explicitly, the first few terms are 
\begin{align}
1,\,2,\,4,\,8,\,16,\,32,\,64,\ldots
\end{align}
\emph{i.e.} $W_n=2^{n-1}$. The ratios of successive terms are all equal to 2, the largest eigenvalue of the growth matrix (equivalently, the largest solution to the characteristic equation).

For substitutions to generate cascades of words admissible as the itinerary of a stable trajectory, they must obey the \emph{generalised composition rules}, defined by the following conditions on $\bar{\mathbf{R}}$ and $\bar{\mathbf{L}}$~\cite{deBruijn81,bailinhao}:
\begin{enumerate}
\item $\bar{\mathbf{R}}$ has odd parity and $\bar{\mathbf{L}}$ has even parity
\item $\bar{\mathbf{R}} \succ \bar{\mathbf{L}}$
\item $\bar{\mathbf{R}}\rvert_C$ is maximal
\item $\bar{\mathbf{R}}\bar{\mathbf{L}}\rvert_C$ is maximal
\item $\bar{\mathbf{R}}(\bar{\mathbf{L}})^\infty$ is maximal.
\end{enumerate}
A special class of cascades, generated by so-called $DGP*$ composition rules, features a constant topological entropy. These cascades can always be restated as a set of substitutions satisfying the generalised composition rules:
\begin{equation}
R\to \bar{\mathbf{W}} K \quad \text{and} \quad L\to \bar{\mathbf{W}}  \widetilde{K}
\end{equation} 
with $K\in \{L,R\}$ and:
\begin{equation}
 \widetilde{K}=\begin{cases}
 L \quad \text{if } K=R \\
 R \quad \text{if } K=L.
 \end{cases}
 \end{equation}
For cascades beginning from $R$, such substitution rules preserve topological entropy from the second word onwards. 

The cascade generated by substitution rules $R\to RLL$ and $L\to RLR$, equivalent to the $DGP*$ composition rules generating period tripling, cannot satisfy the conditions of Theorem~\eqref{secorder}. However, there are substitution rules equivalent to $DGP*$ composition rules, that do satisfy the conditions of Theorem~\eqref{secorder}; for example $R\to RL$ and $L\to RR$ (period doubling) or $R\to RLRR$ and $L \to RLRL$ (period quadrupling). 

Finally, while the focus of this work is substitution sequences of the form specified by Eq.~\eqref{eq:subs_defs}, there are cascades not expressible by substitution rules that our results also apply to (such as the cascade considered in Section~\ref{subsec:Fibonacci}). 

\subsection{Summary of Results}
\label{subsec:summarya}

We consider cascades of words corresponding to superstable orbits of dynamical systems governed by unimodal maps, generated by substitution rules obeying the generalised composition rules. Specialising to the case in which the substitutions generate aperiodic words with the symmetries of quasilattices, we prove that all physically relevant quasilattices can be realized as words describing stable aperiodic orbits in the Logistic map universality class, extending the known result that two cases were possible~\cite{Fli18}.

We prove that if the first three words of a cascade form a second-order recursive relation in terms of string concatenation, the composition rule itself can be restated as a second order recursive relation -- see Theorem~\eqref{secorder}. Such a composition rule gives rise to an algebraic relation between polynomials related to the Milnor-Thurston kneading determinants, for each three consecutive words from the cascade -- see Theorem~\eqref{recdets}.

If the topological entropies of the second and third words in the cascade differ, then no two words from the cascade have equal topological entropies. If, addtionally, the lengths of the words grow exponentially fast (a generic feature) then the topological entropy converges as a double exponential onto the accumulation point $h_\infty$ (which is characteristic for a given cascade). Both statements require one additional identity concerning the aforementioned algebraic relation between polynomials to be satisfied -- see Lemma~\eqref{zerosnoteq} and Theorem~\eqref{main}. 

The asymptotic form of convergence (or \textit{stair-climbing} -- see Fig.~\ref{fig:Staircase}) can be found analytically. We do so for three cascades generated by the following substitution rules: 
\begin{itemize}
\setlength\itemsep{-0.2em}
\item $R\to RL$ and $L \to R^2L$, the Pell cascade (Section~\ref{subsec:Pell}) converges to the accumulation point $h_\infty=0.4411..$ with: 
\begin{align}
\text{ln}\big[ \text{ln}\big(\frac{h_n-h_{n-1}}{h_{n+1}-h_n}\big)\big]\sim n\text{ln}(1+\sqrt2)+\text{ln}(\tfrac{h_\infty}{2\sqrt2})
\end{align}
\item $R \to RLR^2$ and $L \to LR^2$, the Clapeyron cascade (Section~\ref{subsec:Clapeyron}) converges to the accumulation point $h_\infty=0.4484..$ with:
\begin{align}
\text{ln}\big[ \text{ln}\big(\frac{h_n-h_{n-1}}{h_{n+1}-h_n}\big)\big]\sim n\text{ln}(2+\sqrt3)+\text{ln}(\tfrac{1+\sqrt3}{6+4\sqrt3}h_\infty)
\end{align}
\item $R \to RL$ and $L \to L$ converges (geometrically) to the accumulation point $h_\infty=\text{ln}(2)$ with:
\begin{align}
\frac{h_n-h_{n-1}}{h_{n+1}-h_n}\sim 2.
\end{align}
\end{itemize}
Moreover, we show that our results extend to a larger class of cascades (Section~\ref{subsec:Fibonacci}). 

The case of $R \to RL$ and $L \to L$ reveals an interesting property of the Devil's staircase of topological entropy shown in Fig.~\ref{fig:convergence_entropy}~\cite{PENG199443}. Since each word $RL^n$ is the last admissible word of length $n+1$ (in the sense that all admissible words existing for $\lambda > \lambda_{RL^n}$ have length greater than $n+1$), the widths of intervals of topological entropy corresponding to words of lengths greater than $n$ shrink geometrically fast along the $y$-axis~\cite{bailinhao}. This supplements an earlier result showing that the convergence of the control parameter for superstable orbits is  geometrical~\cite{PhysRevLett.47.975}.

Finally, we study the convergence of the control parameter $\lambda_n$ numerically in cascades generated by substitution rules not equivalent to any $DGP*$ composition rule. Such a convergence is faster than geometrical (hence the name \textit{superconvergence}). Universal superconvergence has been found before in dynamical systems governed by multimodal maps~\cite{XU20141505}. The superconvergence of the control parameter in the class of cascades we focus on has the form:
\begin{equation}\label{convcontpar}
\text{ln}[-\text{ln}(\Lambda_{n} / \Lambda_{n+1})] \to An+\text{const} \quad \text{as } n \to \infty
\end{equation} 
where
\begin{equation}
\Lambda_n \equiv \frac{\lambda_n-\lambda_{n+1}}{\lambda_{n+1}-\lambda_{n+2}}.
\end{equation}
Numerical analysis suggests that the constant $A$ is independent of the form of $f_\lambda$, and is a universal characteristic for a given cascade. For example: $A=0.31\ldots$ in the cascade generated by $R\to RL$ and $L \to R^2L$, and $A=0.82\ldots$ in the cascade generated by $R \to RLR^2$ and $L \to LR^2$. 

%
\section{Generalised Time Quasilattices}
\label{sec:gTCs}
%

The physical motivation for considering the general class of substitution sequences specified in Section~\ref{subsec:ressymdyn} is provided by a particular subset of of these substitutions which lead to orbits with the symmetries of \emph{quasilattices}. In this section we consider this subset of sequences and show that, with a small generalisation of the basic quasilattice concept, all physically relevant quasilattices can appear as aperiodic words describing stable orbits in dynamical systems. We term these (generalised) \emph{time quasilattices}, in-keeping with the nomenclature laid out in Ref.~\onlinecite{Fli18}.

Quasilattices are one-dimensional aperiodic sequences formed from two unit cells of different lengths~\cite{GrunbaumShephard,BoyleSteinhardt16}. The substitution rules in this case are known as `inflation rules'~\cite{Senechal,Janot}. The simplest example is given by the Fibonacci quasilattice, generated by the inflation rules
\begin{align}
R\rightarrow RL,\quad L\rightarrow R.
\label{eq:Fibonacci_def}
\end{align}
Starting from the initial symbol $R$ and applying the rules to each symbol in the word results in a sequence of `Fibonacci words':
\begin{align}
R\rightarrow RL\rightarrow RLR\rightarrow RLR^2L\rightarrow RLR^2LRLR\rightarrow\ldots
\label{Fib_words}
\end{align}
The Fibonacci quasilattice $\bar{\mathbf{F}}_\infty$ results after an infinite number of inflations of $R$. We label the $n^{\textrm{th}}$ Fibonacci word $\bar{\mathbf{F}}_n$. Each Fibonacci word is the concatenation of the previous two:
\begin{align}
\bar{\mathbf{F}}_n=\bar{\mathbf{F}}_{n-1}\bar{\mathbf{F}}_{n-2}
\end{align}
for $n>2$, with $\bar{\mathbf{F}}_1=R$, and $\bar{\mathbf{F}}_2=RL$. Considering the inflation rule under addition rather than concatenation gives the growth matrix:
\begin{align}
\left(\begin{array}{c}
R\\
L
\end{array}\right)\rightarrow\left(\begin{array}{cc}
1 & 1\\
1 & 0
\end{array}\right)\left(\begin{array}{c}
R\\
L
\end{array}\right)=\left(\begin{array}{c}
R+L\\
R
\end{array}\right).
\end{align}
The characteristic equation of this matrix,
\begin{align}
\lambda^2=\lambda+1,
\label{eq:characteristic_fib}
\end{align}
has solutions $\varphi$, $\varphi^{-1}$, with $\varphi=\left(1+\sqrt{5}\right)/2$ a quadratic irrational number known as the golden ratio. The length of the $n^{\textrm{th}}$ Fibonacci word $\left|\bar{\mathbf{F}}_n\right|$ is the $n^{\textrm{th}}$ Fibonacci number $F_n$:
\begin{align}
F_n=F_{n-1}+F_{n-2}
\label{eq:Fib_number_def}
\end{align}
for $n>2$ with $F_1=\left|\bar{\mathbf{F}}_1\right|=\left|R\right|=1$ and $F_2=\left|\bar{\mathbf{F}}_2\right|=\left|RL\right|=2$. Starting from $F_0=1$ the first few lengths are 
\begin{align}
1,\,1,\,2,\,3,\,5,\,8,\,13,\,21,\,34,\,55,\,\ldots
\end{align}
Successive applications of the matrix correspond to successive inflations of the original sequence; successive word lengths also grow as Fibonacci numbers $F_n$:
\begin{align}
\left(\begin{array}{cc}
1 & 1\\
1 & 0
\end{array}\right)^{n}=\left(\begin{array}{cc}
F_{n+1} & F_{n}\\
F_{n} & F_{n-1}
\end{array}\right).
\end{align}
Noting that the ratio of successive Fibonacci numbers gives the best rational approximation to $\varphi$ for a given size of denominator, and that
\begin{align}
\lim_{n\rightarrow\infty}\frac{F_n}{F_{n-1}}=\varphi,
\end{align}
Eq.~\eqref{eq:characteristic_fib} can be seen as the $n\rightarrow\infty$ limit of the defining relation of Fibonacci numbers, Eq.~\eqref{eq:Fib_number_def}.

Consider the general case class of substitutions in two-letter alphabets specified in Section~\ref{subsec:ressymdyn}. In each case, the growth of the words under the substitution is characterised by a $2\times 2$ growth matrix $A$ with non-negative integer entries. The eigenvalues are real and given by
\begin{align}
\lambda_\pm=\frac{\textrm{tr}\left(A\right)}{2}\pm\sqrt{\left(\frac{\textrm{tr}\left(A\right)}{2}\right)^2-\det{A}}.
\label{eq:characteristic}
\end{align}
If $\textrm{tr}\left(A\right)^2=4\det{A}$ they are integers. Otherwise, the larger eigenvalue is a quadratic irrational `Pisot-Vijayaraghavan' (PV) number: the largest root of an irreducible monic polynomial, all of whose Galois conjugates have modulus strictly less than one~\cite{GrunbaumShephard,Senechal,Janot,BoyleSteinhardt16}. In this case the monic polynomial is just the quadratic characteristic equation of the matrix, and the Galois conjugate is simply the smaller eigenvalue. The converse, that all quadratic irrational PV numbers can be written as the eigenvalues of $2\times 2$ matrices with non-negative integer entries, follows from the ability to specify the trace and determinant independently in Eq.~\eqref{eq:characteristic}. Explicitly, we can always find a $2\times 2$ matrix $A$ with non-negative integer entries such that the PV number is the largest solution to
\begin{align}
\lambda^2=\textrm{tr}\left(A\right)\lambda-\det{A}
\end{align}
where $\textrm{tr}\left(A\right)$ and $\det{A}$ are uniquely specified by the PV number itself.

The following three conditions are necessary and sufficient for the substitutions to correspond to quasilattice inflation rules~\cite{BombieriTaylor86,BoyleSteinhardt16}:
\begin{enumerate}
\item the growth matrix must be unimodular
\item there must be two spacings between each symbol
\item the largest eigenvalue of the growth matrix must be a PV number.
\end{enumerate}
Table~\ref{tab:PV} lists the first few quadratic irrational PV numbers, their characteristic equations, and an example matrix with this characteristic equation. In the cases where the determinant of the matrix is of unit magnitude, the numbers also correspond to quasilattice inflation rules. The Boyle-Steinhardt class of the physically-relevant cases, to be discussed shortly, is also given.

\noindent \begin{center}
\begin{table}
\noindent \begin{centering}
\begin{tabular}{|c|c|c|c|}
\hline 
PV number & equation & example & QL class\tabularnewline
\hline 
\hline 
$\frac{1+\sqrt{5}}{2}$ & $\lambda^{2}=\lambda+1$ & $\left(\begin{array}{cc}
1 & 1\\
1 & 0
\end{array}\right)$ & 1\tabularnewline
\hline 
$1+\sqrt{2}$ & $\lambda^{2}=2\lambda+1$ & $\left(\begin{array}{cc}
1 & 1\\
2 & 1
\end{array}\right)$ & 2\tabularnewline
\hline 
$\frac{3+\sqrt{5}}{2}$ & $\lambda^{2}=3\lambda-1$ & $\left(\begin{array}{cc}
2 & 1\\
1 & 1
\end{array}\right)$ & 1\tabularnewline
\hline 
$1+\sqrt{3}$ & $\lambda^{2}=2\lambda+2$ & $\left(\begin{array}{cc}
1 & 1\\
3 & 1
\end{array}\right)$ & -\tabularnewline
\hline 
$\frac{3+\sqrt{13}}{2}$ & $\lambda^{2}=3\lambda+1$ & $\left(\begin{array}{cc}
3 & 1\\
1 & 0
\end{array}\right)$ & -\tabularnewline
\hline 
$2+\sqrt{2}$ & $\lambda^{2}=4\lambda-2$ & $\left(\begin{array}{cc}
2 & 2\\
1 & 2
\end{array}\right)$ & -\tabularnewline
\hline 
$\frac{3+\sqrt{17}}{2}$ & $\lambda^{2}=3\lambda+2$ & $\left(\begin{array}{cc}
1 & 2\\
2 & 2
\end{array}\right)$ & -\tabularnewline
\hline 
$2+\sqrt{3}$ & $\lambda^{2}=4\lambda-1$ & $\left(\begin{array}{cc}
1 & 2\\
1 & 3
\end{array}\right)$ & 3\tabularnewline
\hline 
$\frac{3+\sqrt{21}}{2}$ & $\lambda^{2}=3\lambda+3$ & $\left(\begin{array}{cc}
1 & 5\\
1 & 2
\end{array}\right)$ & -\tabularnewline
\hline 
$2+\sqrt{5}$ & $\lambda^{2}=4\lambda+1$ & $\left(\begin{array}{cc}
3 & 1\\
4 & 1
\end{array}\right)$ & 4\tabularnewline
\hline 
\end{tabular}
\par\end{centering}
\caption{\label{tab:PV}The first few quadratic-irrational Pisot-Vijayaraghavan
(PV) numbers. Each is the root of a monic quadratic equation uniquely
specified by the number, listed in the second column. This equation
always corresponds to the characteristic equation of a $2\times2$
matrix with non-negative integer entries; an example is given in each
case in column three. The trace is given by the coefficient of the
linear term in the equation, and the determinant the negative of the
constant term. If the determinant has unit magnitude the number relates
to a quasilattice inflation rule. The Boyle-Steinhardt class of the
quasilattice is listed where applicable (see Table \ref{tab:quasilattices}). }
\end{table}
\par\end{center}

\begin{center}
\begin{table}
\begin{centering}
\begin{tabular}{|c|c|c|c|c|}
\hline 
Class & $A$ & power & $R\rightarrow\bar{\boldsymbol{R}}$ & $L\rightarrow\bar{\boldsymbol{L}}$\tabularnewline
\hline 
\hline 
1 & $\left(\begin{array}{cc}
1 & 1\\
1 & 0
\end{array}\right)$ & 3 & $RLR^{2}L$ & $R^{2}L$\tabularnewline
\hline 
\hline 
2a & $\left(\begin{array}{cc}
1 & 1\\
2 & 1
\end{array}\right)$ & 1 & $RL$ & $R^{2}L$\tabularnewline
\hline 
2b & $\left(\begin{array}{cc}
2 & 1\\
1 & 0
\end{array}\right)$ & 2 & $R\left(LR^{2}\right)^{2}$ & $LR^{2}$\tabularnewline
\hline 
\hline 
3a & $\left(\begin{array}{cc}
3 & 1\\
2 & 1
\end{array}\right)$ & 1 & $RLR^{2}$ & $LR^{2}$\tabularnewline
\hline 
3b & $\left(\begin{array}{cc}
2 & 1\\
3 & 2
\end{array}\right)$ & 2 & $RL\left(R^{2}L\right)^{3}$ & $RL\left(R^{2}L\right)^{2}RL\left(R^{2}L\right)^{3}$\tabularnewline
\hline 
3c & $\left(\begin{array}{cc}
1 & 1\\
2 & 3
\end{array}\right)$ & 1 & $RL$ & $LRLRL$\tabularnewline
\hline 
\hline 
4a & $\left(\begin{array}{cc}
3 & 1\\
4 & 1
\end{array}\right)$ & 1 & $RLR^{2}$ & $R^{2}LR^{2}$\tabularnewline
\hline 
4b & $\left(\begin{array}{cc}
2 & 1\\
5 & 2
\end{array}\right)$ & 2 & $R\left(LR^{2}\right)^{4}$ & $\left(LR^{2}\right)^{4}\left(LR^{3}\right)^{2}\left(LR^{2}\right)^{3}$\tabularnewline
\hline 
4c & $\left(\begin{array}{cc}
1 & 1\\
4 & 3
\end{array}\right)$ & 1 & $RL$ & $R\left(RL\right)^{3}$\tabularnewline
\hline 
4d & $\left(\begin{array}{cc}
0 & 1\\
1 & 4
\end{array}\right)$ & 2 & $RL^{4}$ & $L\left(RL^{4}\right)^{4}$\tabularnewline
\hline 
\end{tabular}
\par\end{centering}
\caption{\label{tab:quasilattices}The Boyle-Steinhardt classification of the
ten physically-relevant quasilattices~\cite{BoyleSteinhardt16}. The first column lists the
class, and the second column lists the growth matrix for the simplest
substitution in that class, the largest eigenvalue of which is the
PV number listed in Table \ref{tab:PV} (consistent across the class). We consider the admissibility
of these quasilattice inflation rules as cascades of words describing
stable orbits in the logistic map universality class. The third column
states the minimum number of compound inflations to describe an admissible
substitution (the corresponding growth matrix will be the matrix in
column two, raised to this power). The final columns list the inflations
themselves. An infinite number of inflations applied to the word $R$
results in a (generalised) time quasilattice.
}
\end{table}
\par\end{center}

The first condition, $\left|\det{A}\right|=1$, implies that, since the growth matrix is an integer matrix, its inverse is also an integer matrix. The inflation (substitution) of any quasilattice sequence can therefore be undone with a well-defined deflation. This endows quasilattices with a discrete scale invariance~\cite{BoyleEA18}. This condition combines with the second condition, that the two cell types appear with two spacings, to imply that the two possible spacings differ by one (\emph{i.e.} $R$ can appear spaced by $n$ or $n+1$ $L$s, and $L$ can appear spaced by $m$ or $m+1$ $R$s). In the Fibonacci quasilattice, $R$ appears sandwiching either 0 or 1 $L$s, and $L$ appears sandwiching either 1 or 2 $R$s.
Note, however, that the growth matrix itself is not sufficient to guarantee the second condition. For example, the growth matrix
\begin{align}
\left(\begin{array}{cc}
1 & 1\\
2 & 1
\end{array}\right)
\end{align}
could correspond to the inflation rules
\begin{align}
R\rightarrow RL,\quad L\rightarrow R^2L
\end{align}
with the first few terms being
\begin{align}
R\rightarrow RL\rightarrow RLR^2L\rightarrow RLR^2LRLRLR^2L\rightarrow\ldots
\end{align}
or it could correspond to 
\begin{align}
R\rightarrow RL,\quad L\rightarrow LR^2
\end{align}
with the first few terms being
\begin{align}
R\rightarrow RL\rightarrow RL^2R^2\rightarrow RL^2R^2LR^3LRL\rightarrow\ldots
\end{align}
The first option is the Pell quasilattice, considered shortly (purple in Fig.~\ref{fig:Staircase}), with $\left\{0,1\right\}$ spaces between $R$s and $\left\{1,2\right\}$ spaces between $L$s. The second option (green in Fig.~\ref{fig:Staircase}) is not a quasilattice, as it features $\left\{0,1,2\right\}$ spaces between $L$s. It is therefore necessary to check the first few terms of the sequence to confirm the substitutions are generating a quasilattice.

The first and second conditions ensure an important alternative construction for the sequence. Shown in Fig.~\ref{fig:quasilattices}, an irrationally-sloped line is drawn intersecting a two-dimensional periodic lattice. Writing $R$ for each intersection with a vertical line of the lattice, and $L$ for each intersection with a horizontal line, the irrational slope (condition 3) ensures that the sequence of $R$s and $L$s is aperiodic. The figure shows the Fibonacci quasilattice being generated by a line with gradient $\varphi^{-1}$; shifting the intersecting line perpendicular to itself generates an uncountably infinite set of different quasilattices which are \emph{locally isomorphic}, meaning that every finite sequence of $R$s and $L$s appearing in one appears in all others~\cite{Senechal,Janot}. Note that shifting the line in this manner causes local $R\leftrightarrow L$ re-arrangements, which is the source of the requirement that the possible letter spacings differ only by one. While not a focus of the present work, this concept of quasilattices relating to higher-dimensional lattices is discussed at length in Refs.~\onlinecite{BoyleSteinhardt16,Fli18,FlickervanWezel15}. 

\begin{figure}[h!]
\includegraphics[width=\linewidth]{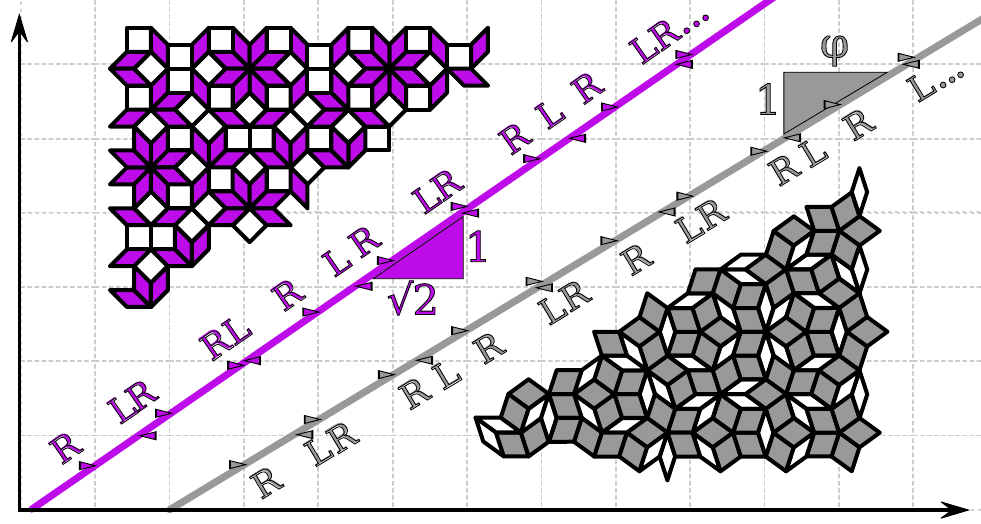}
\caption{The two cells of a quasilattice can be generated by drawing an irrationally-sloped line through a two-dimensional regular lattice; when the line cuts a vertical line, write $R$, and when it cuts a horizontal line, write $L$. The grey line shows the Fibonacci quasilattice being generated by a line with inverse slope $\varphi=\left(1+\sqrt{5}\right)/2$ the golden ratio; the purple line shows the Pell quasilattice being generated by a line with inverse slope related to $1+\sqrt{2}$ the silver ratio. Physically-relevant quasilattices have higher-dimensional counterparts in Penrose-like tilings with symmetries which can be realised by physical quasicrystals. The counterparts of the Fibonacci and Pell quasilattices are the Penrose tiling (grey and white, lower) and Ammann-Beenker tiling (purple and white, upper), respectively~\cite{BoyleSteinhardt16}.
\label{fig:quasilattices}
}
\end{figure}

The third condition, that the eigenvalues be non-integer (and therefore quadratic irrational PV numbers), is again necessary for the interpretation of the quasilattice sequence in terms of a cut through a higher-dimensional regular lattice. Without it, the aperiodic sequence would need to result from the intersections of a rationally-sloped line, which is impossible. In the case of the Fibonacci quasilattice, both the largest eigenvalue and the relative frequencies of the two cell types are given by $\varphi$, the smallest PV number. This condition also has important consequences for the diffraction pattern (Fourier transform) of the sequence: in particular, it forces the pattern to have sharp Bragg peaks, similar to a periodic system, but also a dense background, similar to a disordered system~\cite{BombieriTaylor86}. In fact, the Bragg spectrum of tilings with a discrete scale invariance has a non-trivial structure if and only if the scale factor of the tiling is a PV number~\cite{BombieriTaylor86,Moody}.

While there exist an infinite number of quasilattices satisfying the two criteria, and an infinite number of quadratic irrational PV numbers, it was identified by Boyle and Steinhardt in Refs.~\onlinecite{BoyleSteinhardt16,BoyleSteinhardt16B} that there are only ten physically relevant cases. The physical relevance derives from the fact that quasilattices in these classes have higher-dimensional counterparts whose symmetries correspond to those of physical \emph{quasicrystals}, states of matter intermediate between periodic crystals and disordered glasses~\cite{Levitov88}. The Fibonacci quasilattice has as its counterpart the two-dimensional Penrose tiling~\cite{Penrose74,Gardner}, governed by the PV number $\varphi^2$. This is shown in Fig.~\ref{fig:quasilattices}. Table~\ref{tab:quasilattices} lists each of the ten quasilattice classes along with its growth matrix. The determinant of each matrix has modulus one, and the largest eigenvalue is a PV number listed in Table \ref{tab:PV}. 

Starting from an orbit described by the word $R$, repeated application of the inflation rules will lead to a cascade of stable periodic orbits of increasing length. After an infinite number of substitutions, \emph{i.e.} at the accumulation point of the sequence, lies a stable orbit described by an aperiodic word: a \emph{time quasilattice}. Demonstrating the existence of such cascades is simply a matter of checking the quasilattice substitution rules against the generalised composition rules in Section~\ref{subsec:ressymdyn}. This was previously done in Ref.~\onlinecite{Fli18}, where two such cases were found. The first, class 2a in Table \ref{tab:quasilattices}, features the inflation rules
\begin{align}
R\rightarrow RL,\quad L\rightarrow R^2L.
\end{align}
Writing out the first few words
\begin{align}
R\rightarrow RL\rightarrow RLR^2L\rightarrow RLR^2LRLRLR^2L\rightarrow \ldots
\label{eq:Pell_cascade}
\end{align}
demonstrates that the symbols $R$ and $L$ appear with spaces $\left\{0,1\right\}$ and $\left\{1,2\right\}$, meeting the requirement that each symbol appears with two possible spaces with these spaces differing by one. The lengths of the words grow as Pell numbers $P_n$ (sequence $A000129$ in the Online Encyclopedia of Integer Sequences (OEIS))~\cite{sloane,Eulerlettre}. The words themselves are termed Pell words, with the $n^{\textrm{th}}$ word denoted $\bar{\mathbf{P}}_n$. The `Pell cascade' of Eq.~\eqref{eq:Pell_cascade} results in the Pell quasilattice $\bar{\mathbf{P}}_\infty$. As in the case of the Fibonacci words, the $n^\textrm{th}$ Pell word can also be generated as a concatenation of the previous two; in this case we have
\begin{align}
\bar{\mathbf{P}}_n=\bar{\mathbf{P}}_{n-1}\bar{\mathbf{P}}_{n-2}\bar{\mathbf{P}}_{n-1}
\end{align}
which mirrors the defining equation of the Pell numbers:
\begin{align}
P_n = 2P_{n-1} + P_{n-2}
\label{eq:Pell_num_def}
\end{align}
for $n>2$, $P_1=1$, $P_2=2$. Writing the Pell inflation rule under addition rather than concatenation results in the matrix equation
\begin{align}
\left(\begin{array}{c}
R\\
L
\end{array}\right)\rightarrow\left(\begin{array}{cc}
1 & 1\\
2 & 1
\end{array}\right)\left(\begin{array}{c}
R\\
L
\end{array}\right)=\left(\begin{array}{c}
R+L\\
2R+L
\end{array}\right).
\end{align}
The characteristic equation of this matrix again constitutes the $n\rightarrow\infty$ limit of the defining relation of the Pell numbers, Eq.~\eqref{eq:Pell_num_def}:
\begin{align}
\lambda^2=2\lambda+1.
\end{align}

A similar situation arises in class 3a. The inflation rules are 
\begin{align}
R\rightarrow RLR^2,\quad L\rightarrow LR^2
\end{align}
and writing out the first few cases starting from $R$ reveals that $R$ and $L$ appear with spaces $\left\{0,1\right\}$ and $\left\{2,3\right\}$ respectively. The growth matrix is
\begin{align}
\left(\begin{array}{cc}
3 & 1\\
2 & 1
\end{array}\right)\left(\begin{array}{c}
R\\
L
\end{array}\right)&=\left(\begin{array}{c}
R+L+R+R\\
L+R+R
\end{array}\right)
\end{align}
with characteristic equation
\begin{align}
\lambda^2=4\lambda-1.
\end{align}
This can again be seen as the $n\rightarrow\infty$ limit of the defining relation of an integer sequence. In this case it is the (modulus of the) Clapeyron numbers $C_n$ ($A125905$ in the OEIS)~\cite{sloane}:
\begin{align}
C_n=4C_{n-1}-C_{n-2}
\end{align}
for $n>2$ with $C_1=1$, $C_2=4$. In order to write the $n^{\textrm{th}}$ Clapeyron word $\bar{\mathbf{C}}_n$ in terms of the previous two we must make use of the inverse of a word, defined in Section~\ref{subsec:ressymdyn}:
\begin{align}
\bar{\mathbf{C}}_n=\bar{\mathbf{C}}_{n-1}\bar{\mathbf{C}}^{-1}_{n-2}\bar{\mathbf{C}}^{3}_{n-1}.
\end{align}

In order to establish whether the ten physical quasilattice classes in Table \ref{tab:quasilattices} can appear as words describing stable orbits in nonlinear dynamical systems, we coded an algorithm for testing arbitrary substitution rules against the generalised composition rules. The simplest case, the Fibonacci quasilattice, does not obey the generalised composition rules. For example, the Fibonacci substitutions of Eq.~\eqref{eq:Fibonacci_def} do not preserve the parity of the words under substitution (rule 1). However, three applications of the inflation rules \emph{do} preserve parity, and, in fact, obey all the generalised composition rules. This leads us to the conclusion that every third Fibonacci word can be realised as a stable periodic orbit in a nonlinear dynamical system in the Logistic map universality class. We term the process of carrying out $n$ inflations at each step $n^{\textrm{th}}$-order \emph{compound inflation}. The third-order compound inflation rules for the Fibonacci substitution rules are
\begin{align}
R\rightarrow RLR^2L,\quad L\rightarrow R^2L
\end{align}
giving the first few words
\begin{align}
R\rightarrow RLR^2L\rightarrow RLR^2LR^2LRLR^2LRLR^2LR^2L\rightarrow\ldots
\end{align}

For each of the ten classes in Table \ref{tab:quasilattices} we tested all inflation rules compatible with the growth matrix, including possible re-orderings of the substituted symbol sequences which still result in quasilattices. We also checked the cases with $L\leftrightarrow R$: the $L$ and $R$ labels are arbitrary in terms of the quasilattice unit cells, but can affect the admissibility of the sequences if $L$ is taken to correspond to the left of the dynamical system. If no solution was found, we considered a second-order compound inflation, corresponding to two powers of the substitution matrix $A$ (the second column in Table \ref{tab:quasilattices}), and further compound inflations until a solution was found. The words generated by compound inflation are a slight generalisation of the quasilattices considered in Refs.~\onlinecite{BoyleSteinhardt16} and \onlinecite{Fli18}. For example, the PV number governing the third-order Fibonacci compound inflation is $\varphi^3$ rather than $\varphi$ for the standard Fibonacci quasilattice. However, integer powers of a unimodular matrix are still unimodular, and integer powers of PV numbers are also PV numbers. Therefore the aperiodic sequences resulting from infinite numbers of compound inflations are still valid quasilattices. 

By this method, we were able to find time quasilattices for all ten quasilattice classes. This extends the results of Ref.~\onlinecite{Fli18} in which instances were found in classes 2a and 3a. Column three of Table \ref{tab:quasilattices} states the minimum order of compound inflation (or, equivalently, the minimum power to which the growth matrix $A$ must be raised) for the substitution to become admissible, and the final two columns list the substitutions themselves. 

Although we have not shown them in the table, it is also possible to find admissible quasilattice inflation rules with no higher-dimensional counterparts. Inspecting Table \ref{tab:PV} we see that the smallest PV number whose corresponding inflation matrix is unimodular, but which does not appear in Table \ref{tab:quasilattices}, is $\left(3+\sqrt{13}\right)/2$, with $\textrm{tr}\left(A\right)=3$ and $\det{A}=-1$. A substitution consistent with these conditions and with the generalised composition rules is 
\begin{align}
R&\rightarrow RL^4\left(RL^3\right)^2\nonumber\\
L&\rightarrow \left(LRL^3\right)^3\left(RL^3\right)^7
\end{align}
which we again verified using the word lifting technique. There are infinitely many such PV numbers with unimodular growth matrices. Based on our ability to find examples for all tested cases, it seems reasonable to expect that each can generate a time quasilattice, perhaps adjusting for compound inflation.

%
\section{Superconvergence of Topological Entropy}
\label{sec:topological_entropy}
%

Having focussed on a specific set of substitution sequences in Section~\ref{sec:gTCs} we now return to the general case in order to investigate the development of the words' complexities as we flow down the cascades under repeated substitutions. For word $\mathbf{K}$ (we will distinguish words corresponding to superstable orbits with overbars, $\bar{\mathbf{K}}$, in this section) the \emph{topological entropy} $h(\mathbf{K})$ can be calculated as $-\text{ln}[x^*(\mathbf{K})]$, where $x^*(\mathbf{K})$ is the smallest positive zero of the Milnor-Thurston kneading determinant $D_{\mathbf{K}}$~\cite{MThu}:
\begin{equation}
D_{\mathbf{K}}(x)=\sum^\infty_{n=0}\Theta^nx^n
\end{equation}
with the \textit{invariant co-ordinate} $\Theta^n$ defined as:
\begin{equation}
\begin{aligned}
\Theta^0 &\equiv 1\\
\Theta^n &\equiv \prod^n_{i=1} \epsilon_i \quad \text{for } n>0
\end{aligned}
\end{equation}
where, for a given kneading sequence $\mathbf{K}=K_1K_2\dots$:
\begin{equation}
\epsilon_i =\begin{cases}
     +1, & \text{if}\ K_i = L \\
     -1, & \text{if}\ K_i = R  \\
    \end{cases}
\end{equation}
and
\begin{equation}
\epsilon_{k+1}= \prod^k_{i=1} \epsilon_i, \quad \text{if } K_{k+1} = C.
\end{equation}

We reproduce an important identity~\cite{PENG199443}. Assume $\mathbf{K} = \bar{\mathbf{K}}^\infty$ with $\bar{\mathbf{K}}$ corresponding to a superstable orbit. We define a polynomial called the \textit{finite degree kneading determinant}:
\begin{equation}\label{kneaddet}
D_{\bar{\mathbf{K}}}(x)\equiv \sum^{\lvert  \bar{\mathbf{K}} \rvert -1}_{n=0}\Theta^nx^n
\end{equation}
recalling that $\lvert \bar{\mathbf{K}} \rvert$ is the total number of letters within word $\bar{\mathbf{K}}$. It follows that
\begin{align}
D_{\mathbf{K}}(x)= D_{\bar{\mathbf{K}}}(x) \big[1+x^{\lvert  \bar{\mathbf{K}} \rvert}+x^{2\lvert  \bar{\mathbf{K}} \rvert}+\dots \big] \nonumber\\ 
= \frac{1}{1-x^{\lvert  \bar{\mathbf{K}}\rvert}} D_{\bar{\mathbf{K}}}(x).
\end{align}
The first equality results from the fact that $\Theta^0=\Theta^{n\lvert  \bar{\mathbf{K}} \rvert}=1$ for $n=0,1,2,\dots$; the second equality holds for $x\in (-1,1)$. The spectrum of values of $h(\bar{\mathbf{K}}^\infty)$ indicates that $\tfrac{1}{2}\leq x^*(\bar{\mathbf{K}}^\infty)\leq1$, which simplifies the task of finding $x^*(\bar{\mathbf{K}}^\infty)$ to solving $D_{\bar{\mathbf{K}}}(x^*)=0$ for the smallest positive root~\cite{Tsu00}. 

Recalling that $\lvert \bar{\mathbf{W}} \rvert_R$ returns the number of letters $R$ within word $\bar{\mathbf{W}}$, words $\bar{\mathbf{W}}$ of odd parity will have $\lvert \bar{\mathbf{W}} \rvert_R$ odd and words of even parity will have $\lvert \bar{\mathbf{W}} \rvert_R$ even. 
\begin{theorem}\label{recdets} Take $\bar{\mathbf{E}}=\bar{\mathbf{F}}  \bar{\mathbf{G}}$. Then:
\begin{enumerate}
\item $D_{\bar{\mathbf{E}}}=D_{\bar{\mathbf{F}}}+(-1)^{\lvert \bar{\mathbf{F}}\rvert_R}x^{\lvert \bar{\mathbf{F}} \rvert}D_{\bar{\mathbf{G}}}$
\item $D_{\bar{\mathbf{F}}^{-1}\bar{\mathbf{E}}}=(-1)^{\lvert \bar{\mathbf{F}} \rvert_R }x^{-\lvert \bar{\mathbf{F}} \rvert}(D_{\bar{\mathbf{E}}}-D_{\bar{\mathbf{F}}})$
\item $D_{\bar{\mathbf{E}}\bar{\mathbf{G}}^{-1}}=D_{\bar{\mathbf{E}}}-(-1)^{\lvert \bar{\mathbf{E}} \bar{\mathbf{G}}^{-1} \rvert_R}x^{\lvert \bar{\mathbf{E}}\rvert -\lvert \bar{\mathbf{G}}\rvert}D_{\bar{\mathbf{G}}}$
\end{enumerate}
\end{theorem}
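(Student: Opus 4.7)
The plan is to establish identity 1 by direct computation from the definition of $D_{\bar{\mathbf{K}}}(x)$, then derive identities 2 and 3 as purely algebraic rearrangements once we recognise $\bar{\mathbf{F}}^{-1}\bar{\mathbf{E}}=\bar{\mathbf{G}}$ and $\bar{\mathbf{E}}\bar{\mathbf{G}}^{-1}=\bar{\mathbf{F}}$ from the word-inverse convention in Section~\ref{subsec:ressymdyn}.

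For identity 1, I would split the defining sum $D_{\bar{\mathbf{E}}}(x)=\sum_{n=0}^{|\bar{\mathbf{E}}|-1}\Theta^n x^n$ at the cut $n=|\bar{\mathbf{F}}|$. Since the first $|\bar{\mathbf{F}}|$ letters of $\bar{\mathbf{E}}=\bar{\mathbf{F}}\bar{\mathbf{G}}$ are precisely those of $\bar{\mathbf{F}}$, the partial invariant coordinates $\Theta^n$ for $0\le n\le|\bar{\mathbf{F}}|-1$ agree with those computed from $\bar{\mathbf{F}}$ in isolation, and the low-index piece contributes exactly $D_{\bar{\mathbf{F}}}(x)$. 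For $n=|\bar{\mathbf{F}}|+k$ with $0\le k\le|\bar{\mathbf{G}}|-1$, the key observation is the multiplicative factorisation
\[
\Theta^{|\bar{\mathbf{F}}|+k}=\Bigl(\prod_{i=1}^{|\bar{\mathbf{F}}|}\epsilon_i\Bigr)\Bigl(\prod_{j=1}^{k}\epsilon_{|\bar{\mathbf{F}}|+j}\Bigr),
\]
in which the first factor evaluates to $(-1)^{|\bar{\mathbf{F}}|_R}$ (each $R$ contributes $-1$ and there are $|\bar{\mathbf{F}}|_R$ of them), and the second is precisely the $k$-th invariant coordinate of $\bar{\mathbf{G}}$, because positions $|\bar{\mathbf{F}}|+1,\ldots,|\bar{\mathbf{F}}|+k$ of $\bar{\mathbf{E}}$ are the first $k$ letters of $\bar{\mathbf{G}}$. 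Pulling the constant sign and the overall $x^{|\bar{\mathbf{F}}|}$ out of the tail sum and reindexing $n\mapsto k$ then yields $(-1)^{|\bar{\mathbf{F}}|_R}x^{|\bar{\mathbf{F}}|}D_{\bar{\mathbf{G}}}(x)$, completing identity 1.

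Identities 2 and 3 follow at once. For identity 2, I would simply rearrange identity 1 to solve for $D_{\bar{\mathbf{G}}}=D_{\bar{\mathbf{F}}^{-1}\bar{\mathbf{E}}}$; the prefactor $(-1)^{|\bar{\mathbf{F}}|_R}$ is its own multiplicative inverse, so no sign ambiguity arises. For identity 3, identify $\bar{\mathbf{E}}\bar{\mathbf{G}}^{-1}=\bar{\mathbf{F}}$ so that $|\bar{\mathbf{E}}\bar{\mathbf{G}}^{-1}|_R=|\bar{\mathbf{F}}|_R$ and $|\bar{\mathbf{E}}|-|\bar{\mathbf{G}}|=|\bar{\mathbf{F}}|$, and solve identity 1 for $D_{\bar{\mathbf{F}}}$.

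The only mild subtlety is making sure the splitting argument does not collide with the special prescription for $\epsilon_{k+1}$ when $K_{k+1}=C$. This is taken care of by the conventions of Section~\ref{subsec:ressymdyn}: the symbol $C$ enters only through $\bar{\mathbf{K}}|_C$, which substitutes the \emph{final} letter of a word, whereas the theorem concerns unadorned words $\bar{\mathbf{F}}$, $\bar{\mathbf{G}}$, $\bar{\mathbf{E}}$ in the alphabet $\{L,R\}$. Once this is noted, the proof reduces to the index-splitting and reindexing steps above; I do not anticipate any serious obstacle beyond this bookkeeping.
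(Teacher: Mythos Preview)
Your proposal is correct and follows essentially the same approach as the paper: prove identity~1 by splitting the defining sum for $D_{\bar{\mathbf{E}}}$ at index $\lvert\bar{\mathbf{F}}\rvert$, use the multiplicative structure of the invariant coordinate to extract the factor $(-1)^{\lvert\bar{\mathbf{F}}\rvert_R}$, and then obtain identities~2 and~3 as algebraic rearrangements via $\bar{\mathbf{F}}^{-1}\bar{\mathbf{E}}=\bar{\mathbf{G}}$ and $\bar{\mathbf{E}}\bar{\mathbf{G}}^{-1}=\bar{\mathbf{F}}$. Your explicit remark on the $C$-symbol subtlety is a nice piece of bookkeeping that the paper leaves implicit.
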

\begin{proof}
Identities 2 and 3 are just restatements of identity 1, which follows from the fact that $D_{\bar{\mathbf{F}}^{-1}\bar{\mathbf{E}}}=D_{\bar{\mathbf{G}}}$ and $D_{\bar{\mathbf{E}}\bar{\mathbf{G}}^{-1}}=D_{\bar{\mathbf{F}}}$. For clarity, we introduce some notation: $\Theta^k_{W}$ is an invariant co-ordinate for a substring $K_1K_2\dots K_k$ from $\bar{\mathbf{W}}$. From the definition of the invariant co-ordinate we have:
\begin{equation*}
\Theta^k_{E}=\begin{cases}
    \Theta^k_{F}, & \text{for}\ k<\lvert \bar{\mathbf{F}} \rvert \\
    (-1)^{\lvert \bar{\mathbf{F}}\rvert_R}\Theta^{k-\lvert \bar{\mathbf{F}} \rvert}_{G}, & \text{for}\ \lvert \bar{\mathbf{F}} \rvert \leq k < \lvert \bar{\mathbf{F}} \rvert + \lvert \bar{\mathbf{G}} \rvert. \\
     \end{cases}
\end{equation*}
From equation (\ref{kneaddet}) we have:
\begin{align}
D_{\bar{\mathbf{E}}}(x) &=\sum^{\lvert  \bar{\mathbf{E}} \rvert -1}_{n=0}\Theta_E^nx^n\nonumber\\
&=\sum^{\lvert  \bar{\mathbf{F}} \rvert -1}_{n=0}\Theta_E^nx^n+(-1)^{\lvert \bar{\mathbf{F}}\rvert_R}x^{\lvert \bar{\mathbf{F}} \rvert}\sum^{\lvert  \bar{\mathbf{G}} \rvert -1}_{n=0}\Theta_G^nx^n\nonumber\\
&=D_{\bar{\mathbf{F}}}(x)+(-1)^{\lvert \bar{\mathbf{F}}\rvert_R}x^{\lvert \bar{\mathbf{F}} \rvert}D_{\bar{\mathbf{G}}}(x)
\end{align}
\end{proof}
\begin{remark}\label{continuant}
Provided that cascade $\{\bar{\mathbf{W}}_n\}_{n\in\{0,1,2,3,..\}}$ is generated by a second order linear recursive relation 
\begin{equation}
\bar{\mathbf{W}}_n=g(\bar{\mathbf{W}}_{n-2},\bar{\mathbf{W}}_{n-1})
\end{equation}
under concatenation, Theorem~\eqref{recdets} ensures that:
\begin{equation}\label{eqncontinuant}
D_{\bar{\mathbf{W}}_n}(x)=a_n(x)D_{\bar{\mathbf{W}}_{n-1}}(x)+b_n(x)D_{\bar{\mathbf{W}}_{n-2}}(x)
\end{equation}
for $n \geq 3$ where $a_n(x)$ and $b_n(x)$ are polynomials. 
\end{remark}
\begin{remark}\label{bn}
This applies to all cascades satisfying the conditions of Theorem~\eqref{secorder}. In all of them we have:
\begin{equation}
\bar{\mathbf{W}}_n=g(\bar{\mathbf{W}}_{n-2},\bar{\mathbf{W}}_{n-1})=\bar{\mathbf{W}}_{n-1} g^*(\bar{\mathbf{W}}_{n-2},\bar{\mathbf{W}}_{n-1})
\end{equation}
ensured by the substitution rule $R\to\bar{\mathbf{R}}$, $R$ and $\bar{\mathbf{R}}$ being first two words in the cascade, and the fact that $\bar{\mathbf{R}}$ begins with the letter $R$ (a requirement of admissibility). This means that for such cascades $a_n(x)$ will have a constant term equal to $1$ and $b_n$ will have no constant term. 
\end{remark}

We introduce an important lemma:
\begin{lemma}\label{zerosnoteq}
Given a cascade satisfying the conditions from Remark~\eqref{continuant} with $b_n(\tau_{n-1}) \ne 0$ for $n \geq 3$:
\begin{equation*}
\tau_2 \ne \tau_1 \implies \tau_{n} \ne \tau_{n+1} \text{ for any } n\geq 2
\end{equation*}
where $\tau_n$ is the smallest positive zero of $D_{\bar{\mathbf{W}}_n}(x)$.
\end{lemma}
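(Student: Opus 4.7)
The plan is to proceed by induction on $n$, using $\tau_1 \ne \tau_2$ as the base case and the three-term recursion from Remark~\ref{continuant}, $D_{\bar{\mathbf{W}}_{n+1}}(x) = a_{n+1}(x) D_{\bar{\mathbf{W}}_n}(x) + b_{n+1}(x) D_{\bar{\mathbf{W}}_{n-1}}(x)$, as the inductive engine. I assume the inductive hypothesis $\tau_{n-1} \ne \tau_n$, suppose for contradiction that $\tau_{n+1} = \tau_n$, and aim to derive a contradiction from the recursion evaluated at the common zero.

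The core calculation is immediate. Evaluating the recursion at $x = \tau_n$, the left-hand side $D_{\bar{\mathbf{W}}_{n+1}}(\tau_n) = D_{\bar{\mathbf{W}}_{n+1}}(\tau_{n+1})$ vanishes by the standing assumption, and the first summand $a_{n+1}(\tau_n) D_{\bar{\mathbf{W}}_n}(\tau_n)$ vanishes because $\tau_n$ is a zero of $D_{\bar{\mathbf{W}}_n}$. What remains is $b_{n+1}(\tau_n) D_{\bar{\mathbf{W}}_{n-1}}(\tau_n) = 0$, and the hypothesis $b_{n+1}(\tau_n) \ne 0$ (the lemma's standing assumption with the index shifted by one) forces $D_{\bar{\mathbf{W}}_{n-1}}(\tau_n) = 0$. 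Thus $\tau_n$ is a positive root of $D_{\bar{\mathbf{W}}_{n-1}}(x)$, and since $\tau_{n-1}$ is by definition the smallest such root, I obtain $\tau_n \ge \tau_{n-1}$.

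To close the contradiction I need the reverse bound $\tau_n \le \tau_{n-1}$. This is simply the monotonicity of topological entropy along the cascade: each $\bar{\mathbf{W}}_n$ is admissible, the superstable orbits they label sit at non-decreasing positions along the Devil's staircase of Fig.~\ref{fig:Staircase}, and $h_n = -\ln \tau_n$ is therefore non-decreasing, equivalently $\tau_n \le \tau_{n-1}$. Combined with the previous step, this forces $\tau_n = \tau_{n-1}$, contradicting the inductive hypothesis and completing the induction.

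The step that is least self-contained is the monotonicity $\tau_n \le \tau_{n-1}$. I expect this to be the main obstacle: although it is visually clear from the non-decreasing Devil's staircase, a rigorous justification within this paper should invoke either the parity-lexicographic ordering of Section~\ref{subsec:symbolicDynamics} together with the generalised composition rules satisfied by the substitutions, or the polynomial identities of Theorem~\ref{recdets} applied to the decomposition $\bar{\mathbf{W}}_n = \bar{\mathbf{W}}_{n-1}(\bar{\mathbf{W}}_{n-1}^{-1}\bar{\mathbf{W}}_n)$ to compare the two smallest zeros directly. The remainder of the argument is a clean one-line application of the three-term recursion at a putative common root.
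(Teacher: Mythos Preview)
Your argument is correct and coincides with the paper's proof in all essentials: both evaluate the three-term recursion at a putative common zero, use the hypothesis $b_{n+1}(\tau_n)\neq 0$ to force $D_{\bar{\mathbf{W}}_{n-1}}(\tau_n)=0$, and then invoke monotonicity of the topological entropy to conclude $\tau_n=\tau_{n-1}$. The only difference is organisational---you run a forward induction with a local contradiction at each step, whereas the paper argues the contrapositive via a backwards induction propagating the equality down to $\tau_2=\tau_1$---and the paper, like you, simply cites monotonicity rather than re-deriving it.
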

\begin{proof}
Let $\tau_M=\tau_{M-1}$ for some $M \geq 3$. Assume $\tau_N=\tau_{N-1}$ for some $N \leq M$. Equation~\eqref{eqncontinuant} for $n=N$ evaluated at $\tau_N$ implies
\begin{equation*}
0=b_N(\tau_N)D_{\bar{\mathbf{W}}_{N-2}}(\tau_N)
\end{equation*}
and since $b_n(\tau_{n-1}) \ne 0$ for $n \geq 3$, and $\tau_N=\tau_{N-1}$, it follows that $\tau_N$ is a zero of $D_{\bar{\mathbf{W}}_{N-2}}(x)$. Due to the monotonicity of topological entropy we have $\tau_{N-1}\leq \tau_{N-2}$. Therefore
\begin{equation*}
\tau_N=\tau_{N-1} \implies \tau_{N-1} = \tau_{N-2}.
\end{equation*}
By (backwards) induction we have $\tau_n=\tau_M$ for every $n\leq M$, and, in particular,
\begin{equation*}
[\text{There exists } M \geq 3 \text{ such that } \tau_M = \tau_{M-1}] \implies [\tau_2=\tau_1].
\end{equation*}
Therefore
\begin{equation*}
\tau_2 \ne \tau_1 \implies \tau_{n} \ne \tau_{n+1} \text{ for any } n\geq 2.
\end{equation*}
\end{proof}
Combining, we prove one of the main results of the paper:
\begin{theorem}\label{main}
Provided that a cascade $\{\bar{\mathbf{W}}_n\}_{n\in\{0,1,2,3,..\}}$ is generated by a second order linear recursive relation 
\begin{equation*}
\bar{\mathbf{W}}_n=g(\bar{\mathbf{W}}_{n-2},\bar{\mathbf{W}}_{n-1})=\bar{\mathbf{W}}_{n-1}  g^*(\bar{\mathbf{W}}_{n-2},\bar{\mathbf{W}}_{n-1})
\end{equation*}
under concatenation, and its finite degree kneading determinants $D_{\bar{\mathbf{W}}_n}(x)$ with smallest positive zeros $\tau_n$ satisfy
\begin{equation}\label{th3eqn}
D_{\bar{\mathbf{W}}_n}(x)=a_n(x)D_{\bar{\mathbf{W}}_{n-1}}(x)+b_n(x)D_{\bar{\mathbf{W}}_{n-2}}(x)
\end{equation}
with $b_n(\tau_{n-1})\neq0$ for every $n\geq 3$, the following is true:
\begin{equation}\label{implicationlimit}
[\tau_1 \neq \tau_2] \implies \Big[ \lim_{n\to \infty}\big(\frac{\tau_n-\tau_{n+1}}{\tau_{n-1}-\tau_n}\big)^{\lvert \bar{\mathbf{W}}_{n} \rvert ^{-1}}=\text{const.}\Big]
\end{equation}
with the value of the limit greater than zero. The form of the limit from the equation above does not hold for a few pathological cases of $\rvert \bar{\mathbf{W}}_{n} \lvert$ growing linearly with $n$.
\end{theorem}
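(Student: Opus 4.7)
The plan is to exploit the polynomial recursion of Remark \ref{continuant}, shifted by one index,
\begin{equation*}
D_{\bar{\mathbf{W}}_{n+1}}(x) = a_{n+1}(x) D_{\bar{\mathbf{W}}_n}(x) + b_{n+1}(x) D_{\bar{\mathbf{W}}_{n-1}}(x),
\end{equation*}
and evaluate it at $x = \tau_n$, where the first term on the right vanishes. This yields the clean identity $D_{\bar{\mathbf{W}}_{n+1}}(\tau_n) = b_{n+1}(\tau_n)\, D_{\bar{\mathbf{W}}_{n-1}}(\tau_n)$, with both sides non-zero by Lemma \ref{zerosnoteq} combined with the hypothesis $b_{n+1}(\tau_n)\neq 0$. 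This is the algebraic hinge of the argument.

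Next I would first-order Taylor expand each kneading determinant about its own smallest positive root, $D_{\bar{\mathbf{W}}_{n\pm 1}}(\tau_n) = D'_{\bar{\mathbf{W}}_{n\pm 1}}(\tau_{n\pm 1})(\tau_n - \tau_{n\pm 1}) + O\bigl((\tau_n - \tau_{n\pm 1})^2\bigr)$. The monotonic decrease $\tau_{n-1} > \tau_n > \tau_{n+1} \to \tau_\infty$ together with simplicity of $\tau_k$ as a zero of $D_{\bar{\mathbf{W}}_k}$ makes the linear term dominant. Substituting into the identity and rearranging produces the master recursion
\begin{equation*}
\frac{\tau_n - \tau_{n+1}}{\tau_{n-1} - \tau_n} = -\,b_{n+1}(\tau_n)\,\frac{D'_{\bar{\mathbf{W}}_{n-1}}(\tau_{n-1})}{D'_{\bar{\mathbf{W}}_{n+1}}(\tau_{n+1})}\,\bigl(1 + o(1)\bigr),
\end{equation*}
which multiplicatively relates consecutive gaps in the sequence $\{\tau_n\}$.

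The last step is to extract the $|\bar{\mathbf{W}}_n|^{-1}$ power and inspect each factor. Because $D_{\bar{\mathbf{W}}_k}$ has $\pm 1$ coefficients and degree $|\bar{\mathbf{W}}_k|-1$, the derivative at $\tau_k$ is bounded above by $|\bar{\mathbf{W}}_k|^2$ and (generically) below by an inverse polynomial, so both derivative factors contribute a limit of $1$ as soon as $|\bar{\mathbf{W}}_n|$ grows superlinearly. By iterating Theorem \ref{recdets}, the polynomial $b_{n+1}(x)$ is a signed sum of monomials $\pm x^{p}$ whose exponents are partial sums of word lengths. At $\tau_n \in (1/2,1)$ the monomial with smallest exponent $p_{\min}$ dominates, and the asymptotic ratio $p_{\min}/|\bar{\mathbf{W}}_n| \to c$ (fixed by the dominant eigenvalue of the growth matrix) delivers $|b_{n+1}(\tau_n)|^{1/|\bar{\mathbf{W}}_n|} \to \tau_\infty^{\,c} \in (0,1)$, which is the required positive constant.

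The main obstacle is securing uniform polynomial-in-$|\bar{\mathbf{W}}_n|$ control on $|D'_{\bar{\mathbf{W}}_k}(\tau_k)|$ from below and on the quadratic Taylor remainders from above: neither is allowed to scale exponentially in $|\bar{\mathbf{W}}_n|$, else it would survive the $|\bar{\mathbf{W}}_n|^{-1}$ root and spoil the limit. The lower bound on $|D'|$ can be extracted from the simplicity of $\tau_k$ together with the $\pm 1$ coefficient structure and the fact that $\tau_\infty$ is bounded away from $0$ and $1$, while the remainder estimate follows from the same simplicity upgraded to a uniform-in-$k$ statement via monotonicity of the Devil's staircase. Superlinear growth of $|\bar{\mathbf{W}}_n|$ is precisely what permits these polynomial prefactors to be absorbed, explaining the exclusion of the pathological cascades such as $R \to RL$, $L \to L$, whose linearly-growing word lengths correspond to the degenerate geometric-convergence regime.
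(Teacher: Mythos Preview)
Your proposal is essentially the paper's own argument: evaluate the recursion at a root to kill the $a$-term, Taylor expand the two surviving kneading determinants about their own smallest zeros, and then show that the derivative ratio contributes nothing after the $|\bar{\mathbf{W}}_n|^{-1}$ root while $b_{n+1}(\tau_n)\sim \pm\tau_\infty^{\,p|\bar{\mathbf{W}}_n|+q|\bar{\mathbf{W}}_{n-1}|}$ supplies the constant. The only cosmetic difference is that the paper argues $D'_{\bar{\mathbf{W}}_{n-2}}(\tau_{n-2})/D'_{\bar{\mathbf{W}}_n}(\tau_n)\to 1$ directly from $D_{\bar{\mathbf{W}}_n}\approx D_{\bar{\mathbf{W}}_{n-2}}$ on $[0,1)$ (via Remark~\ref{bn}), whereas you bound the ratio polynomially and let the root kill it; your treatment of the remainder and lower-bound obstacles is, if anything, more explicit than the paper's.
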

\begin{proof}
Remark \eqref{bn} tells us that $b_n(x)$ in equation \eqref{th3eqn} has no constant terms. This means:
\begin{equation}\label{bnexplic}
b_n(x)=-x^{p\rvert \bar{\mathbf{W}}_{n} \lvert+q\rvert \bar{\mathbf{W}}_{n-1} \lvert}[1+\mathcal{O}(x^2)]
\end{equation}
where $p,q\in\mathbb{Z}$ and $\rvert \bar{\mathbf{W}}_{n} \lvert \leq p\rvert \bar{\mathbf{W}}_{n} \lvert+q\rvert \bar{\mathbf{W}}_{n-1} \lvert \leq \rvert \bar{\mathbf{W}}_{n+1} \lvert$.

Evaluating equation (\ref{th3eqn}) at $\tau_{n-1}$ gives
\begin{equation}\label{toexpand}
D_{\bar{\mathbf{W}}_n}(\tau_{n-1})=b_n(\tau_{n-1})D_{\bar{\mathbf{W}}_{n-2}}(\tau_{n-1})
\end{equation}
with $b_n(\tau_{n-1}) \neq 0$. Take $\delta_{n-1}=\tau_{n-1}-\tau_{n}$, and note it is greater than $0$ for every $n$ since $\tau_1 \neq \tau_2$ (see Lemma~\eqref{zerosnoteq}). Since the topological entropy $h(\bar{\mathbf{W}}_n^\infty)$ satisfies:
\begin{equation}
h_n \equiv h(\bar{\mathbf{W}}_n^\infty) = - \text{ln}(\tau_n)
\end{equation} 
and converges to a well-defined limit $h_\infty$ (as values of control parameters $\lambda_n$ within unimodal maps, that correspond to kneading sequences $\mathbf{K}_n=\bar{\mathbf{W}}_n^\infty$ converge to accumulation points at $\lambda_\infty$) we have:
\begin{equation}
\lim_{n\to\infty} \tau_n = \tau_\infty
\end{equation}
and therefore in the limit of large $n$, from the definition of the limit of the sequence $\{\tau_n\}_{n\in\mathcal{N}}$, we have $\delta_n \ll 1$.
This allows us to expand $D_{\bar{\mathbf{W}}_n}(x)$ from equation \eqref{toexpand} in powers of $\delta_{n-1}$ around $\tau_n$, and $D_{\bar{\mathbf{W}}_{n-2}}(x)$ from the same equation in powers of $\delta_{n-2}$ around $\tau_{n-2}$:
\begin{align}
\delta_{n-1}D^\prime_{\bar{\mathbf{W}}_n}(\tau_{n})+\mathcal{O}(\delta^2_{n-1})\nonumber\\
=b_n(\tau_{n-1})[-\delta_{n-2}D^\prime_{\bar{\mathbf{W}}_{n-2}}(\tau_{n-2})+\mathcal{O}(\delta^2_{n-2})].
\end{align}
Since $\delta_n \ll 1$ in the limit of large $n$, we have:
\begin{align}
\lim_{n\to\infty}\frac{\delta_{n-1}}{\delta_{n-2}} &=\lim_{n\to\infty}-b_n(\tau_{n-1})\frac{D^\prime_{\bar{\mathbf{W}}_{n-2}}(\tau_{n-2})}{D^\prime_{\bar{\mathbf{W}}_n}(\tau_{n})}\nonumber\\
&=\lim_{n\to\infty}-b_n(\tau_{n-1})\nonumber\\
&=\lim_{n\to\infty}\tau_{n-1}^{p\rvert \bar{\mathbf{W}}_{n} \lvert+q\rvert \bar{\mathbf{W}}_{n-1} \lvert}[1+\mathcal{O}(\tau_{n-1}^2)].
\end{align}
The second equality follows from the fact that, in the limit of large $n$, we have $D_{\bar{\mathbf{W}}_n}(x)\approx D_{\bar{\mathbf{W}}_{n-1}}(x)$ for $x\in[0,1)$ which can be seen directly from Remark \eqref{bn}; $a_n(x)$ has a constant term equal to 1, and all the other terms within $a_n(x)$ and $b_n(x)$ are proportional to $x^{\lvert \bar{\mathbf{W}}_n \rvert}$ and vanish as $\lvert \bar{\mathbf{W}}_n \rvert \to \infty$. In that limit, we therefore expect $D_{\bar{\mathbf{W}}_n}(x) \approx D_{\bar{\mathbf{W}}_{n-2}}(x)$. In the case that the first derivatives vanish at zero, we can always take higher order terms from the expansion; this would change the value of the limit, but not its constancy. The third equality follows from equation \eqref{bnexplic}. Finally, we end up with the expression:
\begin{align}
&\lim_{n\to\infty}\Big(\frac{\delta_{n-1}}{\delta_{n-2}}\Big)^{\rvert \bar{\mathbf{W}}_{n-1} \lvert^{-1}}\nonumber\\
&= \lim_{n\to\infty}\tau_{n-1}^{p\rvert \bar{\mathbf{W}}_{n} \lvert\rvert \bar{\mathbf{W}}_{n-1} \lvert^{-1} +q}[1+\mathcal{O}(\tau_{n-1}^2)]^{\rvert \bar{\mathbf{W}}_{n-1} \lvert^{-1}}\nonumber\\
&=\tau_{\infty}^{p \lim_{n\to\infty}(\rvert \bar{\mathbf{W}}_{n} \lvert\rvert \bar{\mathbf{W}}_{n-1} \lvert^{-1}) +q}=\text{const.}
\end{align}
or:
\begin{equation}\label{implimit}
\lim_{n\to \infty}\big(\frac{\tau_n-\tau_{n+1}}{\tau_{n-1}-\tau_n}\big)^{\lvert \bar{\mathbf{W}}_{n} \rvert ^{-1}}=\text{const}.
\end{equation}
\end{proof}
\begin{remark}\label{topentr}
Theorem \eqref{main} has a consequence for the convergence of the topological entropy. For cascades satisfying the conditions of Theorem~\eqref{main}, with $\tau_1 \neq \tau_2$, in the limit of large $n$ we have:
\end{remark}
\begin{align}
\frac{h_n-h_{n-1}}{h_{n+1}-h_n} &=\frac{- \text{ln}(\tau_n)+ \text{ln}(\tau_{n-1})}{{- \text{ln}(\tau_{n+1})+ \text{ln}(\tau_{n})}}\nonumber\\
&=\frac{- \text{ln}(\tau_n)+ \text{ln}(\tau_{n}+\delta_{n-1})}{- \text{ln}(\tau_n-\delta_n)+ \text{ln}(\tau_{n})}\nonumber\\
&=\frac{-\text{ln}(\tau_n)+\text{ln}(\tau_n)+\tau_n^{-1}\delta_{n-1} +\cancelto{0}{\mathcal{O}(\delta_{n-1}^2)}}{-\text{ln}(\tau_n)+\tau_n^{-1}\delta_{n}-\cancelto{0}{\mathcal{O}(\delta_{n}^2)}+\text{ln}(\tau_n) }\nonumber\\
&=\frac{\delta_{n-1}}{\delta_n}=\frac{\tau_{n-1}-\tau_n}{\tau_n-\tau_{n+1}}
\end{align}
\textit{which, in light of equation \eqref{implimit}, yields a double-exponential convergence of the topological entropy (provided $\rvert \bar{\mathbf{W}}_{n} \lvert$ grow exponentially fast, which is a generic feature of numbers defined by second-order linear recursive relations).} \medskip

We proceed to study the convergence of the topological entropy within specific cascades, and find its analytical asymptotic form. 

\subsection{The Pell Cascade}
\label{subsec:Pell}

Recall the substitution rule for the Pell cascade, $R \to RL$ and $L \to R^2L$ acting on $R$:
\begin{equation}\label{eq:Pell}
R \to RL \to RLR^2L \to RLR^2LRLRLR^2L \to \ldots
\end{equation}
The first three words from the cascade obey 
\begin{equation}
\bar{\mathbf{P}}_n=\bar{\mathbf{P}}_{n-1}\bar{\mathbf{P}}_{n-2}\bar{\mathbf{P}}_{n-1}.
\end{equation} 

Theorem \eqref{secorder} tells us that the Pell Cascade is generated by a second-order recursive relation:
\begin{equation}
\bar{\mathbf{P}}_n=\begin{cases}
     R, & \text{for}\ n=1 \\
     RL, & \text{for}\ n= 2  \\
     \bar{\mathbf{P}}_{n-1}\bar{\mathbf{P}}_{n-2}\bar{\mathbf{P}}_{n-1}, & \text{for}\ n > 2.  \\
    \end{cases}
\end{equation}
We proceed by writing down the finite degree kneading determinant for $\bar{\mathbf{P}}_{n+1}\rvert_C$ in terms of finite degree kneading determinants for $\bar{\mathbf{P}}_{n}\rvert_C$ and $\bar{\mathbf{P}}_{n-1}\rvert_C$ using the identities from Theorem~\eqref{recdets}. Recall that  $\lvert\bar{\mathbf{P}}_{n}\rvert \equiv P_n$, which corresponds to the $n^{\textrm{th}}$ Pell number.

We have:
\begin{align}\label{recrelpell}
D_{\bar{\mathbf{P}}_{n+1}}(x)=D_{\bar{\mathbf{P}}_{n}}(x)-x^{P_n}D_{\bar{\mathbf{P}}_{n-1}}(x)+x^{P_n+P_{n-1}}D_{\bar{\mathbf{P}}_{n}}(x).
\end{align}
The term $b_{n+1}(x)=-x^{P_n}$ has a single zero $x=0$, that does not coincide with the smallest positive zero of $D_{\bar{\mathbf{P}}_{n}}$, \emph{i.e.} $\tau_n$, for any $n$. Moreover $\tau_1 \neq \tau_2$ (see Fig.~\ref{fig:convergence_entropy}) so for $\delta_n=\tau_{n}-\tau_{n+1}$, due to Theorem~\eqref{main}:
\begin{equation}
\lim_{n \to \infty} \Big(\frac{\delta_n}{\delta_{n-1}}\Big)^{\tfrac{1}{P_n}}=\tau_\infty
\end{equation}
with $0<\tau_\infty<1$. The Pell number $P_n$ can be expressed by a closed formula:
\begin{equation}\label{pell_nos}
P_n=\frac{1}{2\sqrt2}\Big[\big(1+\sqrt2\big)^n-\big(1-\sqrt2\big)^n\Big].
\end{equation}
In the limit of large $n$ we have $P_n \sim \frac{1}{2\sqrt2}\big(1+\sqrt2\big)^n$ , which, due to Remark~\eqref{topentr} yields a double-exponential convergence of topological entropy. In the limit of large $n$ we have:
\begin{align}
\frac{h_n-h_{n-1}}{h_{n+1}-h_n} &\sim\tau_\infty^{ -\frac{1}{2\sqrt2}\big(1+\sqrt2\big)^n}\nonumber\\
&=\text{exp}\big\{\text{exp}[n\text{ln}(1+\sqrt2)+\text{ln}(\tfrac{h_\infty}{2\sqrt2})] \big\}.
\end{align}

Our numerical analysis indicates that $h_\infty=0.4411\ldots$ in the Pell Cascade. To test the theory we examine the asymptotic behaviour of the numerically generated sequence $\{h_n\}_{n\in\mathbb{N}}$:
\begin{equation}\label{defomega}
\Omega(n)\equiv\text{ln}\Big[\text{ln}\Big(\frac{h_n-h_{n-1}}{h_{n+1}-h_n}\Big)\Big]\to an+b.
\end{equation}
We used the Newton-Raphson method for finding consecutive values of $\tau_n$. We wrote a code exploiting this technique to collect the data, accurate to 800 decimal places. The calculated values of $\Omega(n)$ are plotted against $n$ in Fig.\ref{fig:convergence_entropy}. The sequence rapidly reaches its predicted asymptotic behaviour.

\subsection{The Clapeyron Cascade}
\label{subsec:Clapeyron}

The substitution rules $R \to RLR^2$ and $L \to LR^2$ applied to the initial word $R$ generate words with lengths given by the moduli of the Clapeyron numbers $C_n$. The first few words look as follows:
\begin{equation}\label{eq:Clapeyron}
R \to RLR^2 \to RLR^2LR^3LR^3LR^2 \to \ldots
\end{equation}
Owing to Theorem~\eqref{secorder}, the alternative composition rule generating consecutive words representing stable orbits is: 
\begin{equation}
\bar{\mathbf{C}}_{n+1}=\begin{cases}
     R, & \text{for}\ n=1 \\
     RLR^2, & \text{for}\ n=2  \\
     \bar{\mathbf{C}}_{n} \bar{\mathbf{C}}_{n-1}^{-1}  \bar{\mathbf{C}}^3_{n}, & \text{for}\ n > 2.  \\
    \end{cases}
\end{equation}
Using the identities from Theorem~\eqref{recdets} we find a recursive relation between finite degree kneading determinants for $\bar{\mathbf{C}}_{n+1}\rvert_C$, $\bar{\mathbf{C}}_{n}\rvert_C$ and $\bar{\mathbf{C}}_{n-1}\rvert_C$:
\begin{align}
D_{\bar{\mathbf{C}}_{n+1}}(x)= &D_{\bar{\mathbf{C}}_{n}}(x)-x^{C_n}D_{\bar{\mathbf{C}}_{n-1}^{-1} \bar{\mathbf{C}}_{n}}(x)+\nonumber\\
&-x^{2C_n-C_{n-1}}D_{\bar{\mathbf{C}}_{n}}(x)+x^{3C_n-C_{n-1}}D_{\bar{\mathbf{C}}_{n}}(x)
\end{align}
where
\begin{equation}
D_{\bar{\mathbf{C}}_{n}}(x)-D_{\bar{\mathbf{C}}_{n-1}}(x)=(-1)x^{C_{n-1}} D_{\bar{\mathbf{C}}_{n-1}^{-1}  \bar{\mathbf{C}}_{n}}(x).
\end{equation}
The factor $(-1)$ is a result of the odd parity of $\bar{\mathbf{C}}_{n-1}$. We introduced $D_{\bar{\mathbf{C}}_{n-1}^{-1}  \bar{\mathbf{C}}_{n}}(x)$ as a finite degree kneading determinant for $\bar{\mathbf{C}}_{n-1}^{-1}\bar{\mathbf{C}}_{n}\rvert_C$ (an inadmissible word) to provide an intermediate step in explaining the final form of the recursive relation:
\begin{align}
D_{\bar{\mathbf{C}}_{n+1}}(x)=&D_{\bar{\mathbf{C}}_{n}}(x)+x^{C_n-C_{n-1}}[D_{\bar{\mathbf{C}}_{n}}(x)-D_{\bar{\mathbf{C}}_{n-1}}(x)]\nonumber\\
&-x^{2C_n-C_{n-1}}D_{\bar{\mathbf{C}}_{n}}(x)+x^{3C_n-C_{n-1}}D_{\bar{\mathbf{C}}_{n}}(x).
\end{align}
Again, $b_{n+1}(x)=-x^{C_n-C_{n-1}}$ has single zero at $x=0$ which does not coincide with $\tau_n$ for any $n$. Moreover $\tau_1=\neq \tau_2$ (see Fig.~\ref{fig:convergence_entropy}).
Applying Theorem~\eqref{main} and Remark~\eqref{topentr} leads to:
\begin{equation}
\frac{h_n-h_{n-1}}{h_{n+1}-h_n}\sim\text{exp}\Big\{\text{exp}\Big[n\text{ln}\Big(2+\sqrt3\Big)+\text{ln}\Big(\tfrac{1+\sqrt3}{6+4\sqrt3}h_\infty\Big)\Big] \Big\}.
\end{equation}
Our numerical analysis shows that $h_\infty=0.4484\ldots$ As before, we test the above relation by comparing the double logarithm of the left hand side ($\equiv \Omega(n)$) with the asymptote of the form $an+b$. The asymptotic behaviour is reached very quickly (see Fig.~\ref{fig:convergence_entropy}).

\subsection{Miscellaneous Cases}
\label{subsec:Fibonacci}

There are cascades that cannot be generated by substitution rules, but which satisfy the conditions of Theorem~\eqref{main}. To take an example, consider the cascade generated by the second-order linear recursive relation
\begin{align}\label{eq:Fibonacci}
\bar{\mathbf{S}}_{3}\rvert_C &= RC\nonumber\\
\bar{\mathbf{S}}_{4}\rvert_C &= RLC\nonumber\\
\bar{\mathbf{S}}_{n+1}\rvert_C &= \bar{\mathbf{S}}_{n}\bar{\mathbf{S}}_{n-1}\rvert_C.
\end{align}
Word $\bar{\mathbf{S}}_{n}$ has length $\lvert \bar{\mathbf{S}}_{n} \rvert = F_n$ again corresponding to the $n^{\textrm{th}}$ Fibonacci number. Note, however, that the sequence of letters created by these substitutions does not match the sequence of cells in Fibonacci quasilattice (see Section~\ref{sec:gTCs}).

The proof of the admissibility of the words thus defined was given by K. Shibayama in Ref.~\onlinecite{shibayama}, and we term the resulting sequence of words the Shibayama cascade:
\begin{equation}
RC\to RLC \to RLLRC \to RLLRRRLC \to \dots
\end{equation}
which cannot be generated using substitution rules applied to $R$ and $L$. Analysis analogous to that in the previous sections leads to the result
\begin{equation}
\frac{h_n-h_{n-1}}{h_{n+1}-h_n}\sim\text{exp}\big\{\text{exp}[n\text{ln}(\tfrac{1+\sqrt5}{2})+\text{ln}(\tfrac{h_\infty}{\sqrt5})] \big\}
\end{equation}
with $h_\infty=0.5476\ldots$ (found numerically). Again, the asymptotic behaviour of $\Omega(n)$ was reached rapidly (see Fig.~\ref{fig:convergence_entropy}). 

Theorem \eqref{main} states that for the few cascades in which the lengths of the words grow linearly with $n$, the form of the limit is different. For all cascades generated by substitution rules with $R$ as the first word, the only cascade featuring linear growth is generated by $R \to RL$ and $L \to L$. However, the conditions of Theorem~\eqref{main} are satisfied in this case, and we end up with
\begin{equation}
\lim_{n\to\infty}\frac{h_n-h_{n-1}}{h_{n+1}-h_n}=2
\end{equation}
which corresponds to a geometric, rather than double-exponential, convergence. The accumulation point is $h_\infty=\text{ln}(2)$, so it reaches the top of the Devil's staircase of Fig.\ref{fig:Staircase}. Together with the geometric convergence of the control parameter for superstable orbits in such cascades (with a map-dependent, and therefore non-universal, geometric ratio) this leads to an interesting fact concerning the structure of the Devil's staircase of topological entropy~\cite{PhysRevLett.47.975}. Each word $RL^n$ is the last admissible word of length $n+1$ (in the sense that all admissible words existing for $\lambda > \lambda_{RL^n}$ have length greater than $n+1$), and points $(\lambda_{RL^n\rvert_C},h_{RL^n\rvert_C})$ converge geometrically fast along both the horizontal and vertical directions, up to the top of the staircase~\cite{bailinhao}. Since this cascade was studied thoroughly in Ref.~\onlinecite{PhysRevLett.47.975}, we omit any further discussion of it or any of its subcascades $\{ RL^{w_n}\}$.    

Other words with lengths growing linearly were qualitatively discussed before, using different techniques, in Ref.~\onlinecite{Isola1990}. However, we believe the quantitative uniformity in the convergence of the entropy is a new result of the present work.

Following the formula for topological entropy:
\begin{equation}\label{Eckmann}
h\big[(R^{*n}*\bar{\mathbf{M}})^\infty \big]= \frac{1}{2^n}h\big[\bar{\mathbf{M}}^\infty\big]
\end{equation}
derived by Collet, Crutchfield and Eckmann in Ref.~\onlinecite{collet1983}, where the action of the operator $R*$ is that of a single application of substitution rules $R\to RL$ and $L\to RR$ to a word (an example of a $DGP*$ composition rule), and $R^{*n}*=R*R*\dots*R*$ iterated $n$ times~\cite{Derrida1978}. The action of the $R^{*n}*$ operator on each word from a given cascade accumulating at a point in the interval $\Delta_m$ (see Fig.~\ref{fig:Staircase}) will transform such a cascade into another, accumulating at a point in the interval $\Delta_{m+n}$. In this new cascade, the entropy of the $k^{\textrm{th}}$ word, expressed in terms of the entropy of the $k^{\textrm{th}}$ word from the old cascade, is given by $h_k^\prime=2^{-n}h_k$, and the form of the expression for $\Omega(n)$ remains unchanged. This is due to the cancellation of the factors of $2^{-n}$ (see Eq.~\eqref{defomega}). Because of the perfect similarity of intervals $\Delta_n$, the convergence of the control parameters in the transformed cascades is expected to be quantitatively identical~\cite{PhysRevE.51.1983}. The transformed cascade is not equivalent to any cascade satisfying the conditions of Theorem~\eqref{secorder}, since the first word is not $R$. These results are indicated in Fig.~\ref{fig:Staircase}.

\begin{figure}[h!]
\includegraphics[width=\linewidth]{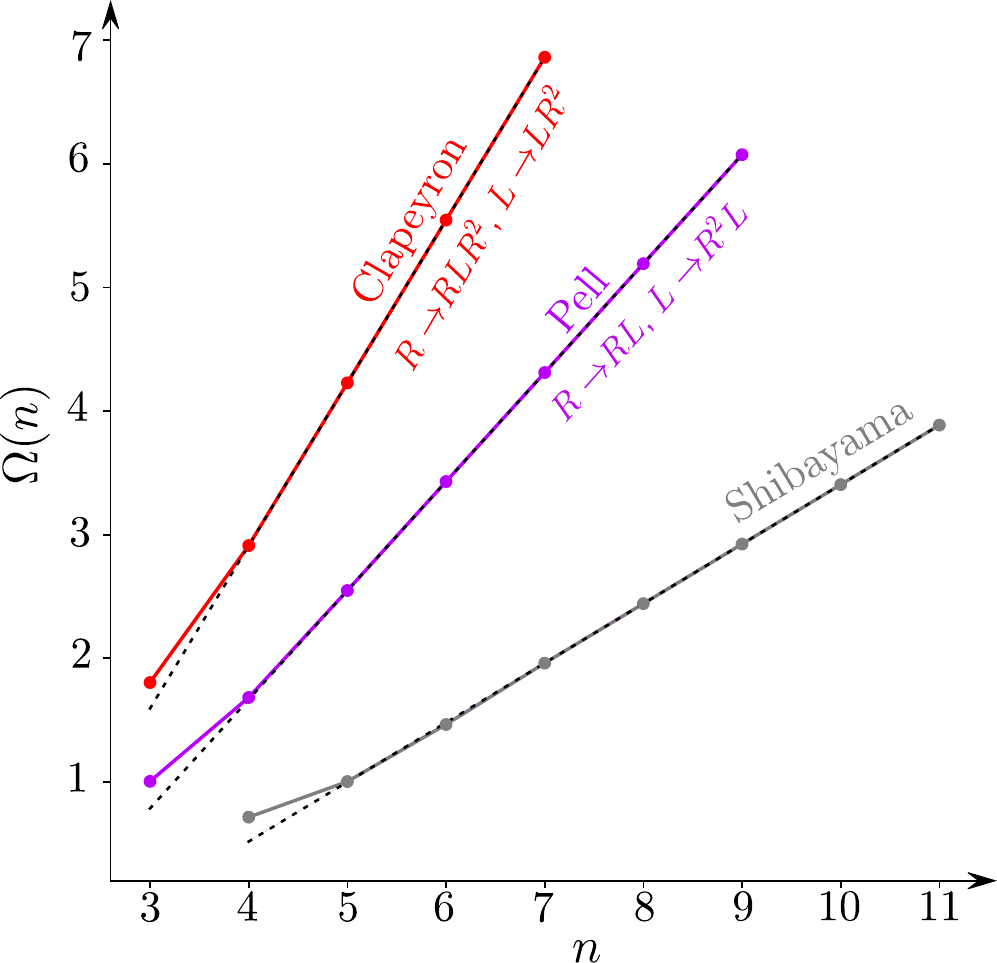}
\caption{The convergence of the topological entropy in the Pell and Clapeyron cascades, and the Shibayama cascade featuring words growing as Fibonacci numbers (Eqs.~\eqref{eq:Pell},\eqref{eq:Clapeyron}, and \eqref{eq:Fibonacci}, respectively). $\Omega(n)=\text{ln}[\text{ln}(\tfrac{h_n-h_{n-1}}{h_{n+1}-h_n})]$ is shown as a function of $n$, where $h_n$ corresponds to the topological entropy of the $n^{\textrm{th}}$ word from a given cascade. Dashed lines are asymptotes of the form $a_in+b_i$ with $a_P=\text{ln}(1+\sqrt2)$ and $b_P=\text{ln}(\tfrac{h_\infty}{2\sqrt2})$, $a_C=\text{ln}(2+\sqrt3)$ and $b_C=\text{ln}(\tfrac{1+\sqrt3}{6+4\sqrt3}h_\infty)$, $a_S=\text{ln}(\tfrac{1+\sqrt5}{2})$ and $b_S=\text{ln}(\tfrac{h_\infty}{\sqrt5})$, corresponding to the respective cascades.
\label{fig:convergence_entropy}
}
\end{figure}

%
\section{Superconvergence of Control Parameter -- Numerical Analysis}
\label{sec:superconvergence}
%

The results presented so far have concerned universal topological aspects of the kneading sequences. The values of the control parameter $\lambda$ in Eq.~\eqref{eq:logistic} will be specific to the Logistic map. Nevertheless, some statements can be made regarding the convergence of the control parameters $\lambda$ appearing within cascades of superstable orbits. With $\lambda_n$ the control parameter giving the $n^\textrm{th}$ orbit in a cascade, we define the geometric ratio of consecutive control parameters as
\begin{equation}
\Lambda_n \equiv \frac{\lambda_n-\lambda_{n+1}}{\lambda_{n+1}-\lambda_{n+2}}.
\end{equation}
It has been proven that $\Lambda_n \to \text{const}>0$ as $n \to \infty$ for period $n$-tupling cascades, generated by $DGP*$ composition rules~\cite{bailinhao}. Our numerical results show that cascades generated by non-$DGP*$ substitution rules feature a divergence of $\Lambda_n$ as $n \to \infty$, which corresponds to convergence faster than geometric (hence the name `superconvergence'). 

We used the technique of `word lifting', defined in Refs.~\onlinecite{bailinhao,Fli18}, to find the values of the control parameter $\lambda$ required to generate consecutive words in cascades. Calculations were performed maintaining up to 400 decimal places. We repeated numerical calculations for both the logistic map $f_\lambda(x)=1-\lambda x^2$, and another unimodal map in the same universality class, the sine map $f_\lambda(x)=\lambda \sin(\pi x)$. Based on the numerical data, we identified a double-exponential convergence, with
\begin{equation}\label{convcontpar}
\text{ln}[-\text{ln}(\Lambda_{n}/\Lambda_{n+1})] \to Dn+\text{const} \quad \text{as } n \to \infty.
\end{equation} 
The value of the parameter $D$ is identical, to numerical precision, for both the sine map and the logistic map, suggesting it is a universal characteristic of a given cascade. For example, $D=0.31..$ in the Pell Cascade, and $D=0.82..$ in the Clapeyron Cascade (see Fig.~\ref{fig:superconvergences}).

Since the lengths of words grow exponentially fast within this class of cascades, the computational cost of `word lifting' becomes high very quickly. Moreover, the required precision increases as one searches for more data points. Due to the rapid convergence of $\lambda_n$ to a single value as $n$ increases, we require more and more decimal places to distinguish $\lambda_n$ from $\lambda_{n+1}$ for higher values of $n$. For that reason the number of data points is low, but still sufficient to illustrate the phenomenon. It is hard to claim universality of the second constant in Eq.~\eqref{convcontpar} based on the data; for example, in the Pell and Clapeyron cascades it is of order $\sim10^{-3}$, which is lower than the accuracy of our estimate of the parameter $D$. 

Other types of superconvergence have been reported before. For example, the values of $\lambda$ in the Shibayama cascade were shown to converge hypergeometrically, with~\cite{shibayama}
\begin{equation}
\Lambda_n/\Lambda_{n+1} \to 0.629... \quad \text{as } n \to \infty.
\end{equation}
However, as mentioned before, this cascade cannot be generated by substitution rules.

%
\section{Conclusions}
\label{sec:Conclusions}
%

\begin{figure}[t!]
\includegraphics[width=\linewidth]{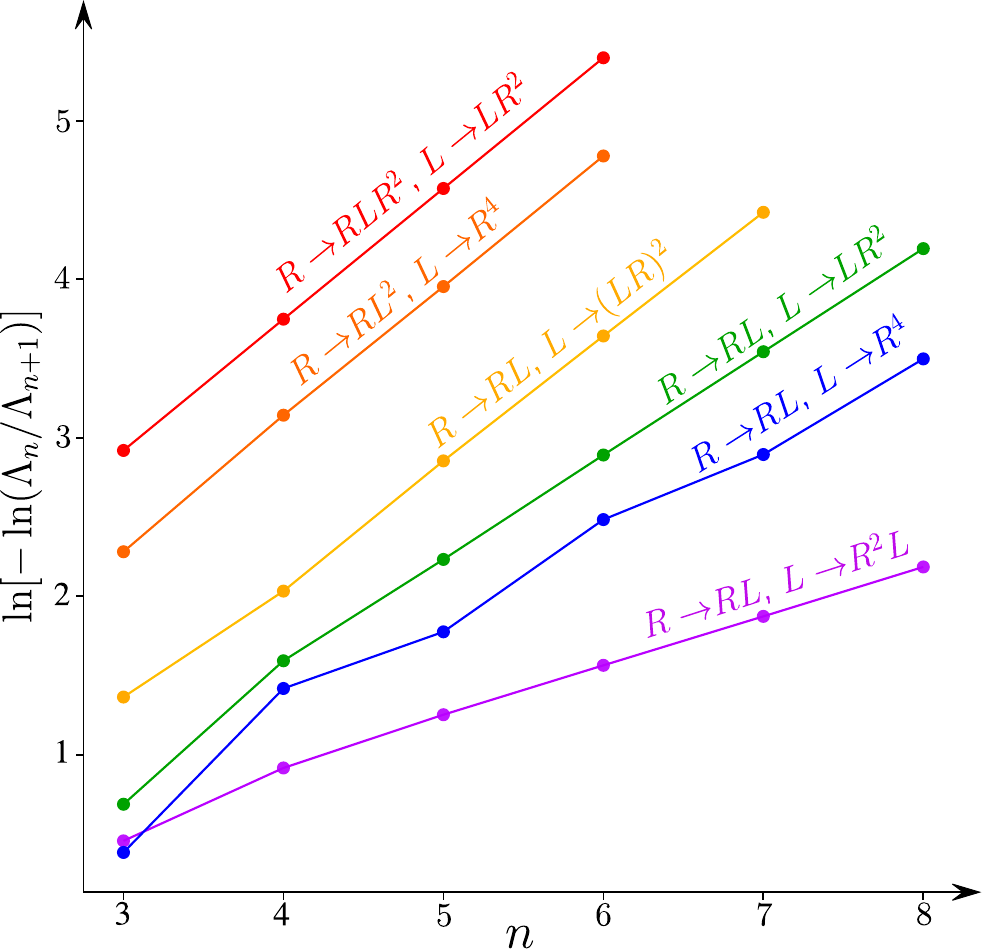}
\caption{The convergence of the control parameter $\lambda_n$ corresponding to the $n^{\textrm{th}}$ superstable orbit in a given cascade, measured by $\Lambda_n \equiv (\lambda_n-\lambda_{n+1})/(\lambda_{n+1}-\lambda_{n+2})$. The data were obtained using the `word lifting' technique applied to the logistic map $f_\lambda(x)=1-\lambda x^2$. The same procedure applied to the sine map $f_\lambda(x)=\lambda\sin(\pi x)$ produces plots indistinguishable by eye. The red and purple (top and bottom) cases correspond to the Clapeyron and Pell cascades respectively. }\label{fig:superconvergences}
\end{figure}

In this work we studied cascades of superstable orbits whose symbolic dynamics are generated by the repeated application of substitution rules. We demonstrated that, under certain general assumptions, substitution sequences can be rewritten as second-order linear recursive relations, allowing us to identify a double-exponential convergence of the topological entropy of successive words to their respective accumulation points. We found the convergence rates analytically for selected cascades. We numerically identified a quantitatively universal double-exponential convergence of the control parameters leading to successive words in the cascades. Finally, we identified a scaling property of the Devil's staircase of topological entropy: the widths of intervals within which all admissible words have lengths greater than an integer $n$ shrink geometrically as $n$ increases, in both topological entropy (with geometric ratio 2) and the control parameter (with a system-dependent geometric ratio, previously identified in Ref.~\onlinecite{PhysRevLett.47.975}). 

While we focussed on discrete-time unimodal maps in this work, the results hold more generally for maps in the same universality class. This class includes more physically-relevant continuous-time systems, such as the autonomous R\"{o}ssler attractor~\cite{Rossler76}, and periodically-driven systems such as the forced Brusselator~\cite{HaoEA83}. The wide applicability of the results again derives from the concept of universality: many physically relevant chaotic models feature sufficiently one-dimensional Poincar\'{e} first-return maps that their dynamics provide a good approximation to those considered here~\cite{badiipoliti, ChaosBook, anosovriemann,smalediff}. Driven dissipative systems will naturally tend to feature dynamics of this sort, as the dissipation causes a collapse onto a subset of the available phase space, but Hamiltonian systems can also demonstrate the same phenomena~\cite{AlligoodEA,ChaosBook}. A quantitative knowledge of the development of the complexity within such maps, as identified here, increases our understanding of the behaviour of the corresponding systems. The physical applications are wide-ranging, from chemical reactions, hydrodynamics, animal populations, and many more~\cite{RouxEA83,ArgoulEA87,GiglioEA81,mayr,strogatz:2000}.

A motivation for the present study was the observation that dissipative dynamical systems can spontaneously break the discrete time translation symmetry of a periodic driving by returning a response with the symmetries of a quasilattice~\cite{Fli18}. When stabilised to finite temperature by the local interactions of many degrees of freedom, the result can be termed a \emph{time quasicrystal}; several experiments were proposed and carried out in which time quasicrystals, or related phenomena, were reported~\cite{giergiel,PhysRevLett.109.163001,PhysRevA.97.012115,PhysRevLett.120.215301}. The Pell and Clapeyron cascades were previously shown to correspond to infinite sequences of successively-improving periodic approximations to time quasilattices, with the quasilattices themselves lying at the sequences' accumulation points~\cite{Fli18}. In the present work we extended this result by identifying time quasilattices in all ten physically-relevant quasilattice classes~\cite{BoyleSteinhardt16}. The results we presented here concerning the convergence of more general cascades including these cases as a subset, in particular the evolution of the ratios of successive control parameters, quantify the precision which would be required in any experimental investigation of time quasilattices via their periodic approximants.

We expect many of the results we have presented to generalise to the symbolic dynamics of systems divided into more than two partitions (described by alphabets containing correspondingly larger numbers of letters), since for discrete-time \emph{multimodal} maps in the logistic universality class the ordering of periodic windows within the chaotic regime changes, but the windows themselves remain~\cite{bailinhao}. Technical details make results in the multimodal case significantly harder to come by. While the monotonicity of topological entropy was proven for the logistic map in a series of early results~\cite{DH85,MThu,dMvS93,Dou95,Tsu00}, the extension to multimodal maps is only a recent development. Milnor's monotonicity of entropy theorem~\cite{Mil92,DGMT95} was proven for cubic maps in Ref.~\onlinecite{MT00}, and for general maps only very recently~\cite{Bruin2009MonotonicityOE,strienweixiao}.

Finding a link between the convergence of the control parameter $\lambda$ and the convergence of the topological entropy $h$ within discrete-time unimodal dynamical systems, such as that suggested in Ref.~\onlinecite{PhysRevE.51.1983}, could, in light of the work presented here, explain the quantitative form of the convergence of $\lambda$. This could point the way to a wider sense of universality, with applications in physics and other fields drawing on chaos theory.

%
\section{Acknowledgements}
The authors wish to thank Sebastian van Strien for helpful discussions. F.~F.~acknowledges support from the Astor Junior Research Fellowship of New College, Oxford.
%

\bibliographystyle{apsrev4-1}
\bibliography{mybib}

\end{document}